\documentclass{neus2025}

\title[Training with Hard Constraints]{Training with Hard Constraints:\\Learning Neural Certificates and Controllers for SDEs}
\usepackage{times}
\usepackage{enumitem}
\usepackage[colorinlistoftodos]{todonotes}
\usepackage{booktabs}
\newtheorem{problem}{Problem}
\newtheorem{assumption}{Assumption}
\newtheorem{mydef}{Definition}
\newtheorem{myremark}{Remark}

\usepackage{dsfont}
\usepackage{float}
\usepackage{bm}
\usepackage{xcolor}
\usepackage{caption}
\usepackage{multirow}
\usepackage{wrapfig}

\RestyleAlgo{ruled, linesnumbered}

\newcommand{\init}{\mathrm{0}}
\newcommand{\goal}{\mathrm{g}}
\newcommand{\safe}{\mathrm{s}}
\newcommand{\unsafe}{\mathrm{u}}
\newcommand{\reals}{\mathbb{R}}
\newcommand{\prob}{\mathrm{Pr}}
\newcommand{\ra}{\textsc{ra}}
\renewcommand{\P}{\mathrm{P}_{\ra}^\pi}
\newcommand{\px}{\mathbf{x}}
\newcommand{\pw}{\mathbf{w}}
\newcommand{\region}{q}
\newcommand{\partition}{Q}
\newcommand{\relu}{\mathrm{ReLU}}
\renewcommand{\L}{\mathcal{L}^\partition}
\newcommand{\Lbound}{\L_\mathrm{bound}}

\coltauthor{
 \Name{Chun-Wei Kong} \Email{Chun-Wei.Kong@colorado.edu}\\
 \Name{Sebastian Escobar} \Email{Sebastian.Escobar@colorado.edu}\\
 \Name{Ibon Gracia} \Email{Ibon.Gracia@colorado.edu}\\
 \Name{Jay McMahon} \Email{Jay.McMahon@colorado.edu}\\
 \Name{Morteza Lahijanian} \Email{Morteza.Lahijanian@colorado.edu}\\
 \addr University of Colorado Boulder, Colorado, USA}

\begin{document}

\maketitle

\begin{abstract}
Due to their expressive power, neural networks (NNs) are promising templates for functional optimization problems, particularly for reach-avoid certificate generation for systems governed by stochastic differential equations (SDEs). However, ensuring hard-constraint satisfaction remains a major challenge. In this work, we propose two constraint-driven training frameworks with guarantees
for supermartingale-based neural certificate construction and controller synthesis for SDEs.
The first approach enforces certificate inequalities via domain discretization and a bound-based loss, guaranteeing global validity once the loss reaches zero. We show that this method also enables joint NN controller-certificate synthesis with hard guarantees. For high-dimensional systems where discretization becomes prohibitive, we introduce a partition-free, scenario-based training method that provides arbitrarily tight PAC guarantees for certificate constraint satisfaction. Benchmarks demonstrate scalability of the bound-based method up to 5D, outperforming the state of the art, and scalability of the scenario-based approach to at least 10D with high-confidence guarantees.

% Reach-avoid specifications for SDEs can be certified by supermartingale conditions, but enforcing the resulting generator inequalities globally is hard for neural certificates and controllers.
% We present a hard-constraint training framework with a certified objective: we partition the domain, propagate interval bounds through the certificate and, and optimize a loss that upper-bounds worst-case violation. 
% Adaptive refinement/merging improves scalability and the same loss supports joint neural controller--certificate synthesis by differentiating through the closed-loop generator.
% For higher-dimensional regimes where partitioning becomes expensive, we provide a complementary scenario method that synthesizes the certificate’s last layer with PAC-style guarantees.
% Empirically, we certify reach--avoid up to 5D with hard guarantees and up to 10D with high confidence, and synthesize certifiable controllers on several nonlinear systems.
% The codes are available at Github.\footnote{\url{https://github.com/sees9730/Certified-Reach-Avoid-via-Neural-Synthesis}}
\end{abstract}

\begin{keywords}
  stochastic differential equations, reach-avoid certificates, neural network with hard constraints, neural controller synthesis.
\end{keywords}

\section{Introduction}

Empirical performance alone is insufficient for \emph{reach-avoid} control of continuous-time stochastic systems, particularly in \emph{safety-critical} applications. In such scenarios, formal guarantees that the controlled dynamics satisfy the specification are required. Supermartingale theory frames this objective as the search for a certificate function that \emph{satisfies a set of inequalities} over a continuous domain. Constructing such a function leads to a functional optimization problem over a chosen template. Classical approaches, such as sum-of-squares (SoS) programming, yield convex formulations but suffer from limited scalability. Neural networks (NNs) provide a flexible and expressive alternative for parameterizing certificates; however, enforcing the required inequalities globally remains an open challenge. Existing methods rely on post-training NN verification to ensure constraint satisfaction, which limits scalability.
In this work, we focus on these challenges and aim to answer two key questions: (i) \textit{How can hard constraints be encoded directly into NN certificate training while preserving scalability?} and (ii) \textit{How can a NN controller and NN certificate be trained jointly?}

Existing approaches for learning NNs with hard constraints provide only partial answers. 
On one end, constraints are enforced \emph{softly} during training via penalty terms, augmented Lagrangian methods, or checks on a finite set of samples \citep{marquez2017imposing,kotary2024learning}.
While these techniques can reduce violations on the training set, 
% they provide no mechanism to rule out violations elsewhere in the domain.
they offer no guarantees over the entire domain.
% , where rare but catastrophic failures may occur.
% 
% On the other end, formal learner-verifier methods alternate between learning and constraint checking via a verifier, e.g., satisfiability-modulo-theories (SMT), interval reasoning, or discretization with Lipschitz arguments. 
% Despite encouraging results in several settings~\citep{goyal2024deepsade,chatterjee2023learner,vzikelic2023learning,edwards2025general}, their effectiveness typically rely on problem structure (e.g., linear/propositional constraints, polynomial dynamics, or discrete-time evolutions)
% that enables the verifier to produce informative feedback for the learner.
% 
On the other end, learner–verifier frameworks alternate between training and formal constraint checking using tools such as SMT solvers, interval analysis, or discretization with Lipschitz arguments. Although effective in structured settings~\citep{goyal2024deepsade,chatterjee2023learner,vzikelic2023learning,edwards2025general}, these methods typically rely on restrictive assumptions (e.g., linear or polynomial dynamics, propositional constraints, or discrete-time systems) that enable informative verifier feedback.
Extending such methods to reach-avoid certificates for general continuous-time, continuous-space stochastic systems described by Stochastic Differential Equations (SDEs) remains challenging, as the constraints include a differential inequality that couples certificate derivatives with state-dependent drift and diffusion over a continuous domain.

In this work, we take a different route: we directly encode hard constraints into NN training so that constraint satisfaction is guaranteed upon termination. We introduce two complementary approaches.
Inspired by NN robustness methods, e.g.,~\citep{gowal2018effectiveness,xu2020automatic}, our first approach is a \emph{bound-training} framework that constructs a bound-based loss. The method partitions the domain, derives parametric bounds for each constraint on every cell, and aggregates worst-case violations into a single loss. Minimizing this loss enforces hard constraints: once it becomes non-positive, all constraints are guaranteed to hold. To address scalability limitations from discretization, we present adaptive refinement and merging strategies to manage the partition effectively. Furthermore, we extend this framework to joint controller–certificate synthesis by backpropagating through a differentiable computation graph derived from the closed-loop SDE dynamics. Our empirical evaluations show that bound-training significantly outperforms existing methods in scalability, though it reaches practical limits around 5D systems.
To handle higher-dimensional SDEs, we introduce a partition-free alternative based on scenario optimization. This approach trains the NN using state samples and provides \emph{probably approximately correct} (PAC) guarantees: with high confidence, the constraints are guaranteed to hold everywhere except on a set of small measure under the sampling distribution. We show that, by focusing on the last-layer parameters, the resulting problem reduces to a linear program, allowing a large number of samples and thus arbitrarily increasing confidence while shrinking the un-guaranteed sets. Consequently, this approach scales to high-dimensional systems (at least 10D) with arbitrarily-tight PAC guarantees.
We provide our code on Github~\citep{CRANS:github}.

In short, the main contributions are: (i) a bound-training framework for NN certificates for reach-avoid specifications with hard guarantees, (ii) its extension to joint synthesis of NN controllers and certificates, (iii) a highly scalable scenario-based training method for NN certificates with arbitrarily-tight PAC guarantees, and (iv) extensive case studies demonstrating improved verification scalability and control synthesis on increasingly challenging systems.

% \paragraph{Contributions.}
% In summary, we:
% \begin{itemize}[itemsep=0.2pt, topsep=1pt, parsep=0pt, partopsep=0pt]
%     \item propose a bound-training approach that certifies reach--avoid specifications for SDEs with hard guarantees;
%     \item extend it to jointly synthesize neural controllers and certificates, combining expressive neural control with formal guarantees;
%     \item develop a scenario-based certificate synthesis method, providing statistical guarantees for higher-dimensional systems; and
%     \item present extensive case studies demonstrating improved verification scalability, as well as control synthesis examples with increasingly challenging dynamics.
% \end{itemize}

\paragraph{Related Work}
\label{sec:related_work}

\textit{Neural network training with hard constraints. }
Hard constraints are commonly enforced \emph{by construction} in discrete-output tasks (e.g., coherent/semantic constraint layers)~\citep{giunchiglia2024ccn+,giunchiglia2020coherent,ahmed2022semantic}, or via post-processing for linear constraints in generative models~\citep{stoian2024realistic}.
Mixed-integer encodings is another approach~\citep{yang2025scalable}, but the resulting problem size (mixed-integer linear program) can grow quickly with network width/depth; moreover, common ReLU architectures are not twice continuously differentiable, making them unsuitable for SDE supermartingale certificates.

\textit{Certificates for dynamical systems. }
Learner--verifier frameworks 
alternate sample-based learning with verification.
For \emph{discrete-time} stochastic systems, existing methods~\citep{chatterjee2023learner,vzikelic2023learning} are mostly low-dimensional due to state- (and even noise-) space partitioning.
 % and some setups do not treat the full-domain boundary as unsafe, so trajectories may exit the domain before reaching the goal.
For \emph{continuous-time} stochastic systems,~\cite{neustroev2025neural} verifies \emph{reach-avoid-stay} with a fixed controller by checking the bounds over a discretized state space, but examples remain two-dimensional and control synthesis is left open. 
% Using control guidance-barrier functions,~\cite{xue2024reach} poses iterative convex programs for SDE control, but is restricted to affine systems and is demonstrated only on a simple 2D SDE without drift dynamics.

A key design choice in learner--verifier frameworks is the verifier. 
Some works use Satisfiability Modulo-Theories(SMT)-in-the-loop training, e.g.,~\cite{abate2020formal,abate2021fossil,peruffo2021automated} verify safety/stability constraints for \emph{deterministic} systems. 
However, SMT scalability degrades with state dimension, and $\delta$-complete solvers (e.g., dReal) may generate spurious counterexamples that increase training iterations; this issue persists in recent unified reach/avoid/stay synthesis frameworks for non-polynomial dynamics~\citep{edwards2025general}. 

\textit{Statistical Certificates. }
Motivated by the scalability limitations of partition-based methods, alternative approaches employ sampling techniques~\citep{campi2009scenario} to verify certificates~\citep{anand2023formally,salamati2021data,samari2025data,liu2016almost}. While some methods incur sample complexity exponential in the state dimension $n$~\citep{anand2023formally,salamati2021data}, PAC-based approaches achieve sample complexity independent of $n$~\citep{samari2025data}. Such methods guarantee, with high confidence, that the certificate conditions hold everywhere except on a small region whose volume shrinks with the sample size. No guarantee is provided within this region, though the certificate remains valid if it holds there.
Existing work primarily focuses on verification of NNs against given constraints. In contrast, we extend these ideas to \emph{synthesize} NN parameters that satisfy the constraints with PAC guarantees. Specifically, we formulate a linear scenario optimization problem to train the last-layer weights of the certificate, resulting in a highly scalable framework with arbitrarily tight PAC guarantees.

\section{Problem Formulation}
% Let $X \subset \mathbb{R}^n$ be a compact set and let $V_\theta:X\to\mathbb{R}$ be a neural network with parameters $\theta\in\Theta$. Our goal is to train $V_\theta$ so that a collection of inequalities holds \emph{everywhere} on specified regions of $X$, rather than only at sampled points. Importantly, these constraints may involve not only the network output but also its derivatives \emph{of arbitrary order} through differential operators combined with nonlinear functions.\footnote{In the reach-avoid instantiation presented later, the constraints involve derivatives up to second order; however, the proposed method applies to constraints involving higher-order derivatives as well.} A general template is
% \begin{equation}\label{eq:general_hard_constraints}
% \begin{aligned}
% \text{find } & \theta \in \Theta \\
% \text{s.t. } 
% & c_k\!\left(x, V_\theta(x), \{D^\alpha V_\theta(x)\}_{|\alpha|\le r_k}\right) \le 0,
% \quad \forall x \in \Omega_k,\quad k=1,\dots,K,
% \end{aligned}
% \end{equation}
% where each $\Omega_k\subseteq X$ and $D^\alpha V_\theta$ denotes a partial derivative of multi-index $\alpha$, with each constraint allowed to depend on derivatives up to some (possibly different) order $r_k$. This form captures simple range constraints (e.g., $V_\theta(x)\ge 0$), constraints restricted to particular subsets (e.g., boundary or initial conditions), and differential inequalities induced by dynamical systems. Next, we describe a concrete instance that motivates our study.

We consider continuous-time, continuous-space stochastic systems given as SDEs, i.e.,
\begin{equation}\label{eq:sde}
    d\px(t) = f\big(\px(t), u(t)\big)\,dt + g(\px(t))\,d\pw(t),
\end{equation}
where $\px(t) \in \mathbb{R}^n$ is the state at time $t \in \mathbb{R}_{\ge 0}$, $u(t) \in U$ is the control, $U \subset \mathbb{R}^{n_u}$ is compact, $f:\mathbb{R}^n \times U \to \mathbb{R}^n$ is the drift, $g:\mathbb{R}^n \to \mathbb{R}^{n\times m}$ is the diffusion, and $\pw$ is $\mathbb{R}^{n_w}$-valued Brownian motion. A \emph{controller} (policy) is a function $\pi : \mathbb{R}^n \to U$ that maps states to control inputs. We let $\Pi$ be the set of all controllers that satisfy the following assumption:
% \ml{state all the assumptions on the dynamics first and then move to defining reach-avoid def.}

\begin{assumption}[Regular Dynamics and Control]\label{assum:reg_dyn}
    % \ml{move above Def 1}
    The drift and diffusion terms $f$ and $g$, as well as every $\pi\in\Pi$ are globally $L$-Lipschitz continuous, with $L > 0$.
    %
    \iffalse
    We assume common regularity conditions on the drift, diffusion, and controller i.e., there exists a positive constant $C$ such that for all $x \in \mathbb{R}^n$ and $u\in U$% and $u\in\Pi$
    ,
    \[
        |f(x,u)| + |g(x)|_F \leq C(1+|x|),
    \]
    and for all $x,y \in \mathbb{R}^n$,
    \[
        |f(x,u(x))-f(y,u(y))| + |g(x)-g(y)|_F \leq C|x-y|
    \]
    % $f(x, u)$ is $L$-Lipschitz continuous in $x$, uniformly in $u$, and $g(x)$ is $L$-Lipschitz continuous in $x$
    \fi
\end{assumption}

% Under this setting, System~\eqref{eq:sde} admits a unique strong solution~\citep{esfahani2016stochastic}.
% Denote $(X_s^{t,x;\bm{u}})_{s\geq t}$ as the unique strong solution  of System~\eqref{eq:sde} starting from time $t$ at the state $x \in \mathbb{R}^n$ under the control $\bm{u}$.
%\textcolor{blue}{IG: I would also state the assumptions on $U$ that are present in \cite{esfahani2016stochastic}}
Given a controller $\pi\in \Pi$ and an initial state $x_0 \in \reals^n$, SDE~\eqref{eq:sde} has a unique strong solution $\px_{x_0}^{\pi}$, which is a random process whose unique law we denote by $\prob^{\pi}_{x_0}$ \citep{esfahani2016stochastic}. 
In this work, we are interested in the reach-avoid probability of System~\eqref{eq:sde}.

\begin{mydef}[Reach-Avoid Probability]
    \label{def:ra_spec}
    Consider System~\eqref{eq:sde}, a compact set $X \subset \reals^n$ of interest, a compact safe set $X_\safe \subset \text{int}(X)$, 
    where $\text{int}(X)$ is the interior of $X$,
    a compact initial set $X_\init \subseteq \text{int}(X_\safe)$, and a compact goal set $X_\goal \subseteq \text{int}(X_\safe)$, and let $X_\unsafe = X \setminus \text{int}(X_\safe)$ be the unsafe set.
    Given a controller $\pi \in\Pi$, the reach-avoid probability is defined as
    \begin{align}
    \label{eq: reach-avoid prob}
        \P(X_\init,X_\unsafe,X_\goal) = \inf_{x_0 \in X_\init} \prob^{\pi}_{x_0}(\exists\, t \geq 0, \ \px^{\pi}_{x_0}(t)\in X_\goal \text{ and } \forall t' \in [0,t] ,\ \px^{\pi}_{x_0}(t')\not\in X_\unsafe ). 
    \end{align}
\end{mydef}

% \begin{assumption}[Reach-avoid specifications]\label{assum:ra_spec}
%     \ml{don't need this.}
%     $X$ is compact, $X_{\init} \subset \text{int}(X \setminus X_{\unsafe})$, $X_{\goal} \subset \text{int}(X \setminus X_{\unsafe})$, $X_{\unsafe} \subset X$ 
%     \ml{hmm... this last one doesn't make sense.  $\reals^n \setminus X$ should also be $X_\unsafe$. \textbf{Nope, for verification it does not need to be, see Assumption 4}
%     \ml{Well... we make the assumption that the controller already keeps the system inside $X$ for the verification examples.  That's different from saying that outside of $X$ is safe.}
%     \ck{These set definition is the same as AAAI paper}
%     }
%     .
% \end{assumption}

% Computation of this probability is challenging as it requires solving SDE~\eqref{eq:sde} and unfolding the trajectory over time.  Instead, using a Lyapunov-like function called a certificate, we can guarantee a lower bound for $\P$ per following theorem, adapted from~\citep[Theorem 1]{neustroev2025neural}.

Direct computation of this probability is difficult, as it requires solving SDE~\eqref{eq:sde} and propagating the trajectory over time. Instead, using a Lyapunov-like function called \text{certificate}, we can guarantee a lower bound on $\P$ via the following theorem, adapted from \citep[Theorem~1]{neustroev2025neural}.

% \begin{theorem}[Reach-avoid certificate~\citep{neustroev2025neural}]\label{theorem:ra_constraints}
    
%     Let $V:X\rightarrow \mathbb{R}^n$ be a twice continuously differentiable function with respect to state $x$.
%     Given System~\ref{eq:sde} and a scalar $\beta_\ra > 1$,
%     if the following constraints hold:
%     \begin{subequations}\label{eq:ra_constraints}
%     \begin{align}
%     & V(x) \ge 0, && \forall x \in X, \label{eq:main-a}\\
%     & V(x) \le 1, && \forall x \in X_{\init}, \label{eq:main-b}\\
%     & V(x) \ge \beta_{RA}, && \forall x \in X_{\unsafe}, \label{eq:main-c}\\
%     & \mathcal{G}[V](x) < 0, && \forall x \in X \setminus \big(X_{\goal} \cup X_{\unsafe}\big), \label{eq:main-d}
%     \end{align}
%     \end{subequations}
%     where 
%     \begin{equation}\label{eq:generator}
%     \mathcal{G}[\,\cdot\,](x) \triangleq 
%     \sum_{i=1}^{n} f_{i}\big(x, u(x)\big)\,\frac{\partial}{\partial x_i}[\,\cdot\,](x)
%     + \frac{1}{2}\sum_{i=1}^{n}\sum_{j=1}^{n} \big[g(x)g(x)^{\!\top}\big]_{ij}
%     \frac{\partial^2}{\partial x_i \partial x_j}[\,\cdot\,](x).
%     \end{equation}
%     Then the reach-avoid probability in Definition~\ref{def:ra_spec} is lower bounded, i.e.,
%     \[
%         P_\ra \geq \frac{1}{1-\beta_\ra}.
%     \]
% \end{theorem}

\begin{theorem}[Reach-Avoid Certificate~\citep{neustroev2025neural}] \label{theorem:ra_constraints}
    Consider System~\eqref{eq:sde} under controller $\pi \in \Pi$, the corresponding infinitesimal generator
    \begin{equation}\label{eq:generator}
        \mathcal{G}[\,\cdot\,](x) \triangleq 
        \sum_{i=1}^{n} f_{i}\big(x, \pi(x)\big)\,\frac{\partial}{\partial x_i}[\,\cdot\,](x)
        + \frac{1}{2}\sum_{i=1}^{n}\sum_{j=1}^{n} \big[g(x)g(x)^{\!\top}\big]_{ij}
        \frac{\partial^2}{\partial x_i \partial x_j}[\,\cdot\,](x),
    \end{equation}
    and the sets 
    % $X \subset \reals^n$, $X_\safe \subseteq \text{Inter}(X)$, $X_\init \subseteq X_\safe$, $X_\goal \subseteq X_\safe$, and $X_\unsafe = X \setminus X_\safe$ 
    $X$, $X_\init$, $X_\goal$, and $X_\unsafe$ in Definition~\ref{def:ra_spec}. 
    If there exists a twice continuously differentiable function $V:X\rightarrow \mathbb{R}$ and scalar $\beta > 0$ such that

    \vspace{-3mm}
    \begin{subequations}
        \label{eq:ra_constraints}
        \hspace{-3mm}
        \begin{minipage}{0.37\textwidth}
            \begin{align}
                & V(x) \ge 0, && \forall x \in X, \label{eq:main-a}\\
                & V(x) \le 1, && \forall x \in X_{\init}, \label{eq:main-b} 
            \end{align}
        \end{minipage}
        \hspace{8.8mm}
        \begin{minipage}[c]{0.53\textwidth}
            \begin{align}
                & V(x) \ge \beta, && \forall x \in X_\unsafe, \label{eq:main-c}\\
                    & \mathcal{G}[V](x) < 0, && \forall x \in 
                    X \setminus \text{int}(X_\unsafe \cup X_\goal),
                    % X_\safe \setminus X_\goal, 
                    \label{eq:main-d}
            \end{align}
        \end{minipage}\\
    \end{subequations}
    
    \noindent
    then $V$ is called a \emph{reach-avoid certificate}, 
    % guaranteeing
    %the reach-avoid probability in \eqref{eq: reach-avoid prob} 
    and
    it holds 
    that $\P(X_\init,X_\unsafe,X_\goal) \geq 1 - \frac{1}{\beta}.$
\end{theorem}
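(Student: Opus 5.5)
We plan the standard supermartingale argument: apply Dynkin's formula to $V$ along the closed-loop process stopped on exiting the "continuation" region, then use Ville's (optional stopping) inequality.

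\textbf{Setup.} Fix $x_0 \in X_\init$. Write $\px = \px^\pi_{x_0}$ and define the stopping time
\[
\tau = \inf\{t \ge 0 : \px(t) \in X_\goal \cup X_\unsafe\}.
\]
Since $X_\unsafe = X\setminus \text{int}(X_\safe)$ contains the boundary of $X$, the stopped process $\px(t\wedge\tau)$ stays in the compact set $X$. By construction it lies in $X\setminus \text{int}(X_\unsafe\cup X_\goal)$ up to time $\tau$, which is exactly the region where \eqref{eq:main-d} holds.

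\textbf{Supermartingale.} Since $V\in C^2$ on compact $X$ and $f,g,\pi$ are Lipschitz (Assumption~\ref{assum:reg_dyn}), Itô's formula gives
\[
V(\px(t\wedge\tau)) = V(x_0) + \int_0^{t\wedge\tau}\mathcal{G}[V](\px(s))\,ds + \int_0^{t\wedge\tau}\nabla V^\top g\,d\pw .
\]
The stochastic integral has a bounded integrand, because $\nabla V$ and $g$ are continuous on compact $X$. It is therefore a true martingale. Since $\mathcal{G}[V]\le 0$ on the integration domain, $M_t := V(\px(t\wedge\tau))$ is a nonnegative supermartingale by \eqref{eq:main-a}, with $M_0\le 1$ by \eqref{eq:main-b}.

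\textbf{Unsafe bound.} Ville's maximal inequality gives
\[
\prob(\sup_t M_t \ge \beta)\le M_0/\beta \le 1/\beta .
\]
If the trajectory hits $X_\unsafe$ before $X_\goal$, then $\tau<\infty$ and $\px(\tau)\in X_\unsafe$ by closedness. Hence $M_\tau\ge\beta$ by \eqref{eq:main-c}, so that event has probability at most $1/\beta$.

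\textbf{Main obstacle: excluding $\tau=\infty$.} We must show that the event of never leaving the continuation region has probability zero. This is where the strict inequality in \eqref{eq:main-d} is used. The continuation region $C=X\setminus \text{int}(X_\unsafe\cup X_\goal)$ is compact and $\mathcal{G}[V]$ is continuous, so there is $\epsilon>0$ with $\mathcal{G}[V]\le -\epsilon$ on $C$. Taking expectations in Dynkin's formula gives
\[
0 \le \mathbb{E}[M_t] \le V(x_0) - \epsilon\,\mathbb{E}[t\wedge\tau].
\]
Hence $\mathbb{E}[\tau]\le V(x_0)/\epsilon<\infty$, and so $\tau<\infty$ almost surely.

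\textbf{Conclusion.} Combining the two steps, almost surely either $X_\goal$ is reached before $X_\unsafe$, or the unsafe event occurs, with probability at most $1/\beta$. Some care is needed on boundary points where the two sets touch; a first hit of $X_\goal$ counts as success. This yields a reach-avoid probability of at least $1-1/\beta$ for each $x_0$. Taking the infimum over $X_\init$ gives the claim.
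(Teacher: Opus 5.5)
Your proof is correct. The paper does not prove this theorem itself; it imports it from \citep{neustroev2025neural}. Your argument is the standard stopped-supermartingale proof that such certificates rely on:
\begin{itemize}
\item apply It\^o/Dynkin to $V(\px(t\wedge\tau))$;
\item use Ville's inequality to bound the probability of hitting $X_\unsafe$ first by $V(x_0)/\beta\le 1/\beta$;
\item use the uniform margin $\mathcal{G}[V]\le-\epsilon$ on the compact set $X\setminus\text{int}(X_\unsafe\cup X_\goal)$, which follows from strictness plus continuity, to get $\mathbb{E}[\tau]\le V(x_0)/\epsilon<\infty$. This step correctly locates where strict inequality is essential.
\end{itemize}

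Two small tightenings. First, by Definition~\ref{def:ra_spec}, $X_\goal\subseteq\text{int}(X_\safe)$ and $X_\unsafe=X\setminus\text{int}(X_\safe)$ are disjoint compact sets. So there are no touching boundary points, and your closing caveat is unnecessary. Second, the stopped path actually stays in $X_\safe\subset\text{int}(X)$, not merely in $X$. The boundary of the open set $\text{int}(X_\safe)$ lies in $X_\unsafe$, so the path is stopped no later than when it leaves $\text{int}(X_\safe)$. This is what makes It\^o's formula legitimate for a $V$ defined only on $X$, for instance after multiplying $V$ by a smooth cutoff equal to $1$ near $X_\safe$ and supported in $\text{int}(X)$. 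It also gives the bounded stochastic-integral integrand without relying on regularity at $\partial X$.
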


Intuitively, $V(x)$ is a non-negative function over $X$ with (desirably) large values on $X_{\unsafe}$ ($\beta$ in Condition~\eqref{eq:main-c}) relative to $X_{\init}$ (Condition~\eqref{eq:main-b}). Condition~\eqref{eq:main-d} ($\mathcal{G}[V](x) < 0$) enforces a strict expected decrease of $V(x)$ along the trajectories of System~\eqref{eq:sde}.  Hence, a larger lower bound $\beta$ of $V(x)$ on $X_\unsafe$ increases the likelihood that trajectories initialized in $X_\init$ reach $X_\goal$ before visiting $X_\unsafe$.  
Constructing such a function, however, is non-trivial.

% Importantly, $\mathcal{G}[V]$ combines derivatives of $V$ with the (generally nonlinear) dynamics $f(x,u)$ and diffusion $g(x)$.
% Based on Theorem~\ref{theorem:ra_constraints}, we formulate the neural verification problem:

The first goal of this work is to leverage the expressive power of neural networks (NNs) as universal function approximators to learn a valid $V$ that certifies a given controller.

\begin{problem}[Neural Certificate]\label{probl:verification}
Given System~\eqref{eq:sde} with domain $X$, initial set $X_\init$, unsafe set $X_\unsafe$, and goal set $X_\goal$ as in Definition~\ref{def:ra_spec}, a controller $\pi \in \Pi$, and threshold probability $p_\ra \in (0,1)$, learn a NN $V_{\theta}(x)$ %parameterized by $\theta$, 
that certifies a reach-avoid probability of at least $p_\ra$, i.e., $\P(X_\init, X_\unsafe, X_\goal) \geq p_\ra$.
% a  train a certificate neural network $V_{\theta}(x)$, parameterized by $\theta$, to guarantee a lower bound of the reach-avoid probability $P_\ra$.
%
% and a reach-avoid specification (Definition~\ref{def:ra_spec}),
% find a certification function $V$ such that the certificate constraints in Definition~\ref{def:ra_cert}
% are satisfied either:
% \begin{itemize}
%     \item \emph{everywhere} over the prescribed domain (hard satisfaction), or
%     \item \emph{with confidence} in a probabilistic sense (statistical satisfaction).
% \end{itemize}
\end{problem}

Often, existing controllers, especially deep policies, fail to ensure high reach-avoid probabilities, resulting in repeatedly re-training the controller and re-attempting certification without a clear guidance for how the controller should improve.
The second objective of this work is to address this issue by jointly training a NN controller and a NN certificate, 
enabling the certificate constraints to directly guide controller updates during training. This coupling reduces trial-and-error and steers learning toward controllers that are both effective and certifiable.
% allowing certificate constraints to guide controller updates during training. This coupling can reduce trial-and-error and steer learning toward controllers that are both effective and certifiable.

\begin{problem}[Joint Neural Controller-Certificate Synthesis]\label{probl:control_synthesis}
Given System~\eqref{eq:sde} with domain $X$, initial set $X_\init$, unsafe set $X_\unsafe$, and goal set $X_\goal$ as in Definition~\ref{def:ra_spec}, and threshold probability $p_\ra \in (0,1)$, jointly learn NNs $\pi_{\theta_{\pi}}(x)$ and $V_{\theta}(x)$, respectively parameterized by $\theta_{\pi}$ and $\theta$, that guarantee a reach-void probability of at least $p_\ra$, i.e., $\P(X_\init, X_\unsafe, X_\goal) \geq p_{\ra}$.

\end{problem}

% \paragraph{Challenges \& Approach Overview}
% The main challenge in Problems~\ref{probl:verification} and \ref{probl:control_synthesis} is \emph{not} training neural networks to maximize the lower bound on $\P$. Rather, the difficulty lies in training neural networks that \emph{satisfy the hard constraints} in $\eqref{eq:ra_constraints}$ to ensure certificate validity. Existing methods typically incorporate these constraints into the optimization loss, which results in a training process with \emph{soft} constraints. Regardless of how small the loss becomes, constraint satisfaction is not guaranteed, hence requiring an expensive post-training verification step to check whether the learned function satisfies the constraints.

\textit{Challenges \& Approach Overview. }
The main challenge in Problems~\ref{probl:verification}~\&~\ref{probl:control_synthesis} is \emph{not} training NNs to maximize the lower bound on $\P$, but ensuring that they \emph{satisfy the constraints} in $\eqref{eq:ra_constraints}$.
Our approach therefore focuses on constraint satisfaction: we (i) enforce Eq.~\eqref{eq:ra_constraints} via bound-training that relies on partioning $X$ (Sec.~\ref{sec:hard_sat}) and (ii) complement it with a scenario-based approach that provides PAC guarantees (Sec.~\ref{sec:scenario}) when partition-based certification becomes costly in higher dimensions.

\begin{myremark}
The proposed training methods extend beyond Probs.~\ref{probl:verification}~\&~\ref{probl:control_synthesis}, providing a \emph{general} framework for training (sufficiently smooth) NNs under arbitrary hard constraints of the form in Eq.~\eqref{eq:ra_constraints}.
\end{myremark}

\section{Training with Hard Constraints}
We present two approaches for training with hard constraints:
(i) bound-training with hard guarantees and (ii) sampling-based scenario optimization with PAC guarantees.  We begin with former.

\vspace{-1mm}

\subsection{Bound-Training for Hard Guarantees}\label{sec:hard_sat}
Consider a twice continuously-differentiable NN $V_{\theta}: X \rightarrow \mathbb{R}$ with learnable parameters $\theta$.
It is well established that, using interval arithmetic~\citep{sunaga1958theory}, bounds of $V_\theta$ can be derived over a given compact set $q \subset X$.  Let $\underline{V}_{\theta},\overline{V}_{\theta},:2^{X} \to \reals$ be functions that provide these bounds, i.e.,
\vspace{-1mm}
\begin{equation}
    \label{eq: bounds of V}
    \underline{V}_{\theta}(q) \leq 
    V_{\theta}(x)
    \leq \overline{V}_{\theta}(q) \qquad \qquad \forall x \in \region.
    \vspace{-1mm}
\end{equation}
% 
% where $\underline{V}_{\theta}(\region),\overline{V}_{\theta}(\region)$ denote the lower and upper bounds, respectively.
Note that these bounds can be obtained in analytical forms (parameterized with $\theta$)
using existing tools (e.g., auto-LiRPA~\citep{xu2020automatic}) for common NNs.
Furthermore, the bounds are differentiable w.r.t. $\theta$, as they are computed recursively by applying the NN activation functions to affine combinations of the weights and biases (see~\citep{goyal2024deepsade} and references therein).
Using $\underline{V}_{\theta},\overline{V}_{\theta}$, we derive a bound-based loss function for certificate training of $V_\theta$,
as defined below.

\vspace{-0.5mm}

\begin{mydef}[Bound-based loss function]  \label{def:bound_train_loss}
Let $\partition = \{\region_1, \ldots, \region_{n_Q}$\}, where $\region_i \subseteq X$, be a partition of $X$, i.e., $\cup_{i=1}^{n_Q} \region_i = X$ and $ \region_i \cap \region_j = \emptyset$ for all $i\neq j \in \{1,\ldots,n_Q\}$,
and consider an NN $V_\theta$ with its bounds $\underline{V}_\theta$ and $\overline{V}_\theta$ in Eq.~\eqref{eq: bounds of V}.
Denote by $\overline{\mathcal{G}[V_{\theta}]}(\region_i)$ an upper bound of $\mathcal{G}[V_{\theta}](x)$ for all $x \in \region_i \in \partition$.
% , and the explicit computation of this bound is introduced later.
% 
The \emph{bound-training loss} for the constraints in Eq.~\eqref{eq:ra_constraints} for a given $\beta >0$ is defined as:
% \begin{subequations}
\begin{align}
    \Lbound(V_\theta, \beta) =
    w_{\ge 0} \L_{\ge 0}(V_\theta)
    + w_\init \L_\init(V_\theta)
    + w_\unsafe \L_\unsafe(V_\theta,\beta)
    + w_{\mathrm{gen}} \L_{\mathrm{gen}}(V_\theta), \label{eq:ra_hard_total_loss}
\end{align}
% \end{subequations}
% 
\noindent
where $w_{\ge 0}, w_\init, w_\unsafe, w_{\mathrm{gen}} > 0$ are positive weights, and\\
\begin{subequations}
    \label{eq:eq:ra_hard_loss}
    % \hspace{-5mm}
    \begin{minipage}{0.44\textwidth}
        \label{eq: loss constraints}
        \begin{align}
            & \L_{\ge 0}(V_\theta) = \sum_{\region_i \in \partition} \relu(-\underline{V}_{\theta}(\region_i)), \label{eq:ra_hard_loss-b}\\
            & \L_\init(V_\theta) = 
            \!\!\!\!\!\!\!\!\!
            \sum_{\substack{\region_i \in \partition, \,
            \region_i \cap X_\init \neq \emptyset}} 
            \!\!\!\!\!\!\!\!\!\!\!
            \relu(\overline{V}_{\theta}(\region_i)-1), \label{eq:ra_hard_loss-c} 
        \end{align}
    \end{minipage}
    \hspace{1em}
    \begin{minipage}[c]{0.52\textwidth}
        \begin{align}
            & \L_\unsafe(V_\theta,\beta) = 
            \hspace{-5mm}
            \sum_{\substack{\region_i \in \partition, \, \region_i \cap X_\unsafe \neq \emptyset}} 
            \hspace{-5mm}
            \relu(\beta-\underline{V}_{\theta}(\region_i)), \label{eq:ra_hard_loss-d}\\
            & \L_{\mathrm{gen}}(V_\theta) = 
            \hspace{-5mm}
            \sum_{\substack{\region_i \in \partition, \, \region_i \cap {\bar X} \neq \emptyset}} 
            \hspace{-5mm}
            \relu(\overline{\mathcal{G}[V_{\theta}]}(\region_i) + \varepsilon_{\mathrm{gen}}),  \label{eq:ra_hard_loss-e}
        \end{align}
    \end{minipage}\\
\end{subequations}

\noindent
with
$\relu(y) = \max\{0, y\}$ for $y \in \reals$, 
set
$\bar{X} = X \setminus \text{int}(X_\goal \cup X_\unsafe),$
and small scalar
$\varepsilon_{\mathrm{gen}} > 0$.
\end{mydef}

\vspace{-0.5mm}

Intuitively, the bound-training loss $\Lbound$ measures the extent to which $V_\theta$ violates the certificate constraints in Eq.~\eqref{eq:ra_constraints}, given the bounds $\overline{V}_\theta$, $\underline{V}_\theta$, and $\overline{\mathcal{G}[V_{\theta}]}$. Specifically, the loss terms in Eqs.~\eqref{eq:ra_hard_loss-b}–\eqref{eq:ra_hard_loss-e} upper-bound the violation of constraints \eqref{eq:main-a}–\eqref{eq:main-d}, respectively.
This yields a key guarantee as stated in the following theorem.

\vspace{-0.5mm}

\begin{theorem}[Constraint Satisfaction Guarantees]
\label{prop:hard_sat}
    % \ml{shouldn't this be a theorem?  It's a main result of the paper, no?}
    Given System~\eqref{eq:sde} under controller $\pi \in \Pi$, reach-avoid probability threshold $p_\ra \in (0,1)$, 
    and partition $\partition$ of $X$,
    let $V_{\theta}$ be trained according to 
    $\min_{\theta}\,\Lbound(V_\theta, \frac{1}{1-p_\ra})$,
    where $\Lbound$ is the bound-training loss in Definition~\ref{def:bound_train_loss}.
    If $\Lbound(V_\theta,$ $\frac{1}{1-p_\ra}) = 0$, then $V_{\theta}$ 
    satisfies the constraints in Definition~\ref{theorem:ra_constraints}, i.e., is a valid reach-avoid certificate, and the reach-avoid probability is lower-bounded by $\P(X_\init, X_\unsafe, X_\goal) \geq p_\ra$.
\end{theorem}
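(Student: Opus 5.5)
The plan is to show that a zero loss forces every individual ReLU term to vanish, convert each vanished term into a pointwise inequality via the soundness of the bounds, and then invoke Theorem~\ref{theorem:ra_constraints} with $\beta = \frac{1}{1-p_\ra}$. Since the weights $w_{\ge 0}, w_\init, w_\unsafe, w_{\mathrm{gen}}$ are strictly positive and every summand in Eqs.~\eqref{eq:ra_hard_loss-b}--\eqref{eq:ra_hard_loss-e} is a ReLU, hence nonnegative, $\Lbound(V_\theta,\beta)=0$ implies that each summand is zero. For a scalar $y$, $\relu(y)=0$ is equivalent to $y\le 0$. This gives four families of cellwise inequalities:
\begin{itemize}
\item $\underline{V}_\theta(\region_i)\ge 0$ for all $\region_i\in\partition$;
\item $\overline{V}_\theta(\region_i)\le 1$ for all cells meeting $X_\init$;
\item $\underline{V}_\theta(\region_i)\ge\beta$ for all cells meeting $X_\unsafe$;
\item $\overline{\mathcal{G}[V_\theta]}(\region_i)\le-\varepsilon_{\mathrm{gen}}$ for all cells meeting $\bar X$.
\end{itemize}

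Next, I pass from cells to points using the covering property of the partition, $\cup_i \region_i = X$. Take any $x\in X$ and let $\region_i$ be a cell containing it. By the soundness of the bounds in Eq.~\eqref{eq: bounds of V}, $V_\theta(x)\ge \underline{V}_\theta(\region_i)\ge 0$, which gives \eqref{eq:main-a}. If $x\in X_\init$, then its cell intersects $X_\init$ and so appears in the sum $\L_\init$; hence $V_\theta(x)\le\overline{V}_\theta(\region_i)\le 1$, which gives \eqref{eq:main-b}. The same argument applied to $X_\unsafe$ gives $V_\theta(x)\ge\beta$, which is \eqref{eq:main-c}. For $x\in\bar X = X\setminus\text{int}(X_\unsafe\cup X_\goal)$, the cell containing $x$ meets $\bar X$. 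Using the defining property of $\overline{\mathcal{G}[V_\theta]}$ as an upper bound on $\region_i$, we get $\mathcal{G}[V_\theta](x)\le -\varepsilon_{\mathrm{gen}}<0$; this is the strict inequality \eqref{eq:main-d}, and it is exactly why the margin $\varepsilon_{\mathrm{gen}}>0$ is included.

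Finally, $V_\theta$ is twice continuously differentiable by the standing assumption on the network, and $\beta=\frac{1}{1-p_\ra}>0$ because $p_\ra\in(0,1)$. Theorem~\ref{theorem:ra_constraints} therefore applies and yields $\P\ge 1-\frac1\beta = 1-(1-p_\ra)=p_\ra$.

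There is no deep obstacle here. The only points requiring care are the following:
\begin{itemize}
\item The restricted sums range over exactly the cells that intersect the relevant sets, so every point of those sets is covered by some cell that appears in the sum.
\item The bounds $\underline{V}_\theta$, $\overline{V}_\theta$ and $\overline{\mathcal{G}[V_\theta]}$ are genuinely sound over each whole cell, as stipulated in Eq.~\eqref{eq: bounds of V} and Definition~\ref{def:bound_train_loss}.
\item The margin $\varepsilon_{\mathrm{gen}}$ turns the non-strict cellwise bound into the strict inequality required in \eqref{eq:main-d}.
\end{itemize}
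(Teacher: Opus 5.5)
Your proof is correct and takes the same route as the paper, whose proof is only the observation that $\mathrm{ReLU}(y)=0$ if and only if $y\le 0$. You make explicit the routine steps that this one-liner leaves implicit: nonnegativity of the weighted summands, covering of each constrained set by the cells in the corresponding sum together with soundness of the cellwise bounds, the role of $\varepsilon_{\mathrm{gen}}>0$ in getting the strict generator inequality, and the identity $1-1/\beta=p_\ra$ for $\beta=\frac{1}{1-p_\ra}$.
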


\vspace{-1mm}

\begin{proof}
    The proof follows by the fact that $\relu(y) = 0$ iff $y \leq 0$.
\end{proof}

This theorem shows that a bound-training loss $\Lbound$ of zero ensures validity of the certificate. 
The following theorem shows the other way around, i.e., given a strictly satisfying certificate, its associated $\Lbound$ value becomes zero for sufficiently fine partition.

\begin{theorem}\label{prop:consistency}
Assume that, for a given NN $V^\star_{\theta}$ and reach-avoid probability threshold $p_\ra \in (0,1)$,
there exist parameters \(\theta\) such that \(V^\star_\theta\) satisfies the inequalities in
Theorem~\ref{theorem:ra_constraints} \emph{strictly} (i.e., without equalities).
Let \(\{\partition^{(k)}\}_{k \in \mathbb{N}, k\ge 1}\) be a sequence of partitions of \(X\) with mesh diameter
$\mathrm{diam}(\partition^{(k)})\triangleq \max_{\region\in\partition^{(k)}}\sup_{x,y\in\region}\|x-y\|_2$ and
\(\partition^{(k+1)}\) is obtained from \(\partition^{(k)}\) by splitting one or more cells such that
\(\mathrm{diam}(\partition^{(k)})\to 0\) as \(k\to\infty\).
Assume the certified bounds used to define
\(\mathcal{L}^{\partition^{(k)}}_{\mathrm{bound}}\) are sound and consistent under refinement,
then $\lim_{k\rightarrow \infty}\mathcal{L}^{\partition^{(k)}}_{\mathrm{bound}}(V^\star_\theta,\beta)=0$.
\end{theorem}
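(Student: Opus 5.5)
Since every cell of $\partition^{(k)}$ contributes a term of the form $\relu(\cdot)$, the plan is to show that for all sufficiently large $k$ every argument of every $\relu$ in \eqref{eq:ra_hard_loss-b}--\eqref{eq:ra_hard_loss-e} is non-positive. Then $\mathcal{L}^{\partition^{(k)}}_{\mathrm{bound}}(V^\star_\theta,\beta)=0$ eventually, and a fortiori the limit is zero. We interpret ``consistent under refinement'' in the standard interval-arithmetic sense: there is a modulus $\omega(\delta)\to 0$ as $\delta\to 0$ with
\[
\overline{V}_\theta(\region)-\sup_{x\in \region}V^\star_\theta(x)\le \omega(\mathrm{diam}(\region)),\qquad \inf_{x\in\region}V^\star_\theta(x)-\underline{V}_\theta(\region)\le \omega(\mathrm{diam}(\region)),
\]
and analogously $\overline{\mathcal{G}[V_\theta]}(\region)-\sup_{x\in\region}\mathcal{G}[V^\star_\theta](x)\le\omega(\mathrm{diam}(\region))$. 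Interval bounds on Lipschitz compositions have this Lipschitz-type convergence, so this reading is reasonable. We also take $\beta$ to be the value appearing in the strict inequality \eqref{eq:main-c}; otherwise we just take $\beta=1/(1-p_\ra)$.

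\textbf{Step 1: uniform margins on neighborhoods.} Strict satisfaction on compact sets, together with continuity of $V^\star_\theta$ and of $\mathcal{G}[V^\star_\theta]$, gives a margin $\gamma>0$ with $V^\star_\theta\ge\gamma$ on $X$, $V^\star_\theta\le 1-\gamma$ on $X_\init$, and $V^\star_\theta\ge\beta+\gamma$ on $X_\unsafe$. Continuity of $\mathcal{G}[V^\star_\theta]$ follows from $V^\star_\theta\in C^2$ and the continuity of $f,g,\pi$ (Assumption~\ref{assum:reg_dyn}). For the generator condition we need $\mathcal{G}[V^\star_\theta]\le-\varepsilon_{\mathrm{gen}}-\gamma$ on $\bar X$, so the strictness must be read as $\mathcal{G}[V^\star_\theta]<-\varepsilon_{\mathrm{gen}}$, matching the $\varepsilon_{\mathrm{gen}}$-tightened constraint the loss encodes. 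Because the loss sums over cells that merely \emph{intersect} $X_\init$, $X_\unsafe$ or $\bar X$, these margins are needed on a neighborhood of each set. By uniform continuity on the compact $X$, there is $\delta_0>0$ such that each inequality, with margin $\gamma/2$, holds on the $\delta_0$-neighborhood (within $X$) of the corresponding set.

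\textbf{Step 2: conclude.} Choose $k$ so that $\mathrm{diam}(\partition^{(k)})<\delta_0$ and $\omega(\mathrm{diam}(\partition^{(k)}))<\gamma/2$. A cell meeting $X_\init$ lies inside the $\delta_0$-neighborhood of $X_\init$, so
\[
\overline{V}_\theta(\region)\le 1-\gamma/2+\omega<1,
\]
and its $\relu$ term vanishes. The same argument disposes of \eqref{eq:ra_hard_loss-b}, \eqref{eq:ra_hard_loss-d} and \eqref{eq:ra_hard_loss-e}. Refinement keeps the diameters below $\delta_0$, so the loss stays zero for all larger $k$.

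\textbf{Main obstacle.} The delicate point is Step~1 at cells straddling boundaries. A cell meeting $\bar X$ can protrude into $\mathrm{int}(X_\goal\cup X_\unsafe)$, where the generator condition is not assumed. The strict inequality, holding on the closed set $\bar X$, must therefore be extended to a small open neighborhood by continuity, which is exactly where compactness and $C^2$ regularity are used. The argument also needs the hypothesis that the generator bound is consistent under refinement, since bounding second derivatives through the network is only guaranteed to be sound. We take that hypothesis as given rather than proving it.
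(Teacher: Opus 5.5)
Your proposal is correct and follows essentially the same route as the paper's proof. Both arguments take a uniform margin from strict satisfaction on the compact sets and extend it by continuity to neighborhoods of $X_\init$, $X_\unsafe$, $\bar X$; once the mesh is fine, every cell meeting one of these sets lies in the corresponding neighborhood, and soundness/consistency of the bounds makes every $\mathrm{ReLU}$ argument non-positive. The differences are cosmetic: you make ``consistent under refinement'' quantitative through a modulus $\omega$, and you handle the fixed $\varepsilon_{\mathrm{gen}}$ by reading strictness as $\mathcal{G}[V^\star_\theta]<-\varepsilon_{\mathrm{gen}}$ on $\bar X$, whereas the paper instead chooses $0<\varepsilon_{\mathrm{gen}}<\varepsilon/2$ after fixing the margin $\varepsilon$.
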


The proof of Thm.~\ref{prop:consistency} is provided in Appendix~\ref{appendix:proof_hard_bound_consist}.
% 
% By Theorem~\ref{prop:hard_sat}, learning $V_{\theta}(x)$ until the bound-training loss is zero guarantees that the certificate constraints are satisfied everywhere.
% Moreover, we show that feasibility is recovered in the limit as the partition is refined (Proposition~\ref{prop:consistency}).
% 
Theorems~\ref{prop:hard_sat} and~\ref{prop:consistency} establish that the proposed loss function enables reasoning about the soundness of a certificate provided that the partition is sufficiently fine. Based on these results, we design an efficient computational framework. 
% To that end, we first make the following observation: 
%\begin{myremark}
    % No samples $\{x\}_{i=1}^N \in X$ are needed to evaluate the bound loss (Definition~\ref{def:bound_train_loss}).
    We also remark that the proposed bound-training approach does not require samples from $X$ or System~\eqref{eq:sde}.
%\end{myremark}

% 
Before introducing the bound-raining algorithm, we first present a method of bounding $\overline{\mathcal{G}[V_{\theta}]}$ and a refinement strategy that improves training convergence.

\paragraph{Bounding $\mathcal{G}[V_{\theta}]$}
% 
% Given an SDE and a certificate network $V_{\theta}(x)$, we derive the analytical expression of $\mathcal{G}[V_{\theta}]$ (Eq.~\eqref{eq:generator})
% We derive an analytical expression for
% bound $\overline{\mathcal{G}[V_{\theta}]}$
% by applying chain rule recursively through the NN $V_\theta$ (see Appendix~\ref{appendix:gv_derive} for an illustrative example of a fully-connected sigmoid network).
% We use this analytical expression to construct a computation graph $\Phi :X \rightarrow \mathbb{R}$ such that (i) 
% \ml{sounds like magic at the moment.  Need to provide a construction}
% $\Phi(x) = \mathcal{G}[V_{\theta}](x)$, where $\Phi$ shares the same learnable parameters $\theta$, and (ii) the upper bound $\overline{\mathcal{G}[V_{\theta}]}(\region_i)$ can be obtained efficiently via interval bound propagation through $\Phi$.
% Accordingly, the bound-training in 
% % Theorem~\ref{prop:hard_sat} can be written explicitly as
% % \begin{equation}\label{eq:update_verificaiton}
%     % \min_{\theta} \Lbound\big(V_{\theta}(x), \Phi_{\theta}(x), \partition_\ra; \; P_\ra \big).
% % \end{equation}
% Eq.~\eqref{eq:ra_hard_loss-e} becomes computable.
% 
We derive an analytical expression for the bound $\overline{\mathcal{G}[V_{\theta}]}$ by recursively applying the chain rule through the NN $V_\theta$ (see Appendix~\ref{appendix:gv_derive} for an illustrative example using a fully connected sigmoid network). 
This yields a computation graph $\Phi_\theta : X \rightarrow \mathbb{R}$ that ensures $\Phi_{\theta}(x) = \mathcal{G}[V_{\theta}](x)
$ for all $x \in X$ while sharing the same learnable parameters $\theta$.
Accordingly, the upper bound $\overline{\mathcal{G}[V_{\theta}]}(\region_i)$ for each $\region_i \in \partition$ can be obtained via interval bound propagation through $\Phi_{\theta}$, making bound-training loss in Eq.~\eqref{eq:ra_hard_total_loss} computable. 
% for verification (Problem~\ref{probl:verification}), and we minimize the following expression:
Reflecting $\Phi_\theta$ in the notation of $\Lbound$, the training objective becomes:
\begin{equation}\label{eq:update_veri}
    \min_{\theta} \Lbound\big(V_{\theta}, \Phi_{\theta}, \beta).
\end{equation}
\noindent

\paragraph{Adaptive refinement and merging}
Here, we introduce a refinement procedure to enable efficient bound-training based on Theorem~\ref{prop:consistency}.
Given initial partition $\partition$, we categorize its regions according to the constraints, i.e., $\partition_{\geq 0} = \partition$, $\partition_\init = \{\region_i \in \partition \mid \region_i \cap X_\init \neq \emptyset\}$, $\partition_\unsafe = \{\region_i \in \partition \mid \region_i \cap X_\unsafe \neq \emptyset\}$, and $\partition_{\mathrm{gen}} = \{\region_i \in \partition \mid \region_i \cap \big( X \setminus \text{int}\big(X_\goal \cup X_\unsafe) \big) \neq \emptyset \}$.
For each sub-loss function $\L_{j}$ with $j\in\{\geq 0,\\ \, \init, \, \unsafe, \, \textrm{gen} \}$ in Eq.~\eqref{eq: loss constraints},
% each $\partition_{(\cdot)} \in \partition_\ra$, 
we increasingly tighten the bounds on $V_\theta$ and $\mathcal{G}[V_{\theta}]$ by adaptively refining the corresponding partition $\partition_j$.
Specifically, we split each cell $\region \in \partition_{j}$, whose corresponding $\relu$ evaluation is positive, into smaller regions every $k_{\mathrm{refine}}$ epochs.
% , for all $\region_i$ whose evaluation of the $\relu(\cdot,\region_i)$ in the corresponding bound-training loss (Definition~\ref{def:bound_train_loss}) are positive.

One central issue of this refinement is the exponential growth in the number of cells, especially in higher dimensions.
Hence, we adopt two strategies: (i) \emph{top-K refinement} and (ii) \emph{merging} to mitigate this issue.
Firstly, rather than refining all $\region \in Q_j$ with positive $\relu$ values, a top-K rule is applied: for each sub-loss $\L_j$, we rank the cells $\region \in \partition_{j}$ in descending order of their $\relu$ values and refine only the top-K cells.
% with the largest positive values
Secondly, every $k_{\mathrm{merge}}$ epochs, we merge neighboring cells $\region,\region' \in \partition_{j}$ if their corresponding bounds ($\underline{V}_\theta$, $\overline{V}_\theta$, or $\overline{G[V_\theta]}$)
satisfy the constraints with some margins, e.g., adjacent cells $\region,\region' \in \partition_{\geq 0}$ are merged if $\underline{V_{\theta}}(\region),\underline{V_{\theta}}(\region') \geq \alpha_{\text{margin}}$, where $\alpha_{\text{margin}} \gg 0$.

Our empirical evaluations demonstrate the effectiveness of this refinement method, which enables scalability to 5D SDEs. 
We next present a comprehensive and 
unified algorithm that integrates all the aforementioned methods for neural certificate and joint certificate-controller training.

\paragraph{Hard-SAT Algorithm for Neural Certificate and Controller Synthesis}
Theorems~\ref{prop:hard_sat} and \ref{prop:consistency} establish that our bound-training method ensures hard-constraint satisfaction, enabling a simple procedure for joint training of neural controller-certificate.
%Here we show how to extend this method for joint neural controller-certificate synthesis.
Note that the controller only affects the infinitesimal generator $\mathcal{G}$ (Eq.~\eqref{eq:generator}) in Theorem~\ref{theorem:ra_constraints}.
Let $\pi_{\theta_{\pi}}(x)$ denote the initialization of the controller, with parameters $\theta_{\pi}$. Then, we write the bound-based generator constraint in Eq.~\eqref{eq:ra_hard_loss-e} as:
$
    \L_{\mathrm{gen}}(\Phi_{\theta,\theta_{\pi}}) = 
    \sum_{\region_i\in \partition_{\mathrm{gen}}} 
    \relu(\overline{\Phi}_{\theta, \theta_\pi}(\region_i) + \varepsilon_{\mathrm{gen}}),
$
where $\Phi_{\theta, \theta_\pi}$ is the analytical generator of $V_{\theta}$, parameterized by both $\theta$ and $\theta_\pi$, inducing the following bound-based loss to jointly update them:
% \vspace{-1mm}
\begin{equation}\label{eq:update_ctrlsyn}
    \min_{\theta,\theta_{\pi}} \Lbound\big(V_{\theta},  \Phi_{\theta,\theta_{\pi}}, \beta).
\end{equation}

Unlike existing joint synthesis approaches (often formulated for deterministic and/or discrete-time systems) which employ sample-based losses, the bound-training one in Eq.~\eqref{eq:update_ctrlsyn} provides a single, certificate-driven, objective for learning both $V_\theta$ and $\pi_{\theta_{\pi}}$. 
From a multi-network training perspective, optimizing a unified loss often improves the stability of joint training compared to balancing multiple objectives. 
From a certification perspective, our loss explicitly couples controller performance with constraint satisfaction. 
Empirically, this coupling guides the controller not only toward task performance but also toward certification throughout training.
In Algorithm~\ref{alg:hard_sat}, we present a unified pseudocode for neural certificate or joint synthesis with hard constraint satisfaction.

%%%%%%%%%%%%%%%%%%%%%%%%%

\begin{algorithm}[htb!]
\caption{Hard-SAT for Verification or Control Synthesis}\label{alg:hard_sat}
\KwIn{System~\eqref{eq:sde}, sets $X,X_\init,X_\unsafe,X_\goal$, threshold $p_\ra$, $k_{\max}$
}
\KwOut{$\mathrm{UNSAT}$ or $(\mathrm{SAT}, V_{\theta},\pi)$}

Initialize $V_{\theta}$, $\pi$ ($\pi\leftarrow \pi_{\theta_{\pi}}$ for control synthesis), $\beta \gets \frac{1}{1-p_\ra}$, and $\partition_{\geq 0}, \partition_\init, \partition_\unsafe, \partition_\textrm{gen}$

Construct computation graph $\Phi_\theta$ (or $\Phi_{\theta,\theta_\pi}$)

\While{$k \leq k_{\max}$}{
 Compute bounds $\underline{V}_\theta$, $\overline{V}_\theta$, and  $\overline{\Phi}_\theta$ (or $\overline{\Phi}_{\theta,\theta_\pi}$) 
 
 Update $\theta$ by Eq.~\eqref{eq:update_veri} (Verification) or  Update $(\theta,\theta_{\pi})$ by Eq.~\eqref{eq:update_ctrlsyn} (Control Synthesis)
  
  % \If{$\Lbound = 0$}{
  %   \Return{$(\mathrm{SAT}, V_{\theta}, \pi)$}\;
  % }
  \lIf{$\Lbound = 0$}{\Return{$(\mathrm{SAT}, V_{\theta}, \pi)$}}

  \If{$k = k_{\mathrm{refine}}$ or $k = k_{\mathrm{merge}}$}{
    % \If{the number of total discretization exceeds $\region_{\max}$}{
    %     \Return{$\mathrm{UNSAT}$}
    % }
    \lIf{out of memory (OM)}{\Return{$\mathrm{UNSAT}$}}
    Do refinement or merging on $\partition_{\geq 0}, \partition_\init, \partition_\unsafe, \partition_\textrm{gen}$\;
  }
}
\Return{$\mathrm{UNSAT}$}
\end{algorithm}

\noindent
% \input{sections/alg_sample}

%%%%%%%%%%%%%%%%%%%%%%%%%

% \paragraph{A few remarks on Algorithm~\ref{alg:hard_sat}}
During the initialization of $V_{\theta}$ (and $\pi_{\theta_{\pi}}$ for control synthesis) in Algorithm~\ref{alg:hard_sat}, one can warm-start them by standard techniques (e.g., sample-based training or reinforcement learning). In practice, such warm-starting allows appropriate initialization of the networks and could improve the convergence rate of the bound-training. 
Furthermore, Algorithm~\ref{alg:hard_sat} terminates when either $\Lbound = 0$ or the training epoch reaches the compute limit, returning $\mathrm{UNSAT}$.
\begin{myremark}\label{remark:opt_algo}
    To improve performance, we can adopt an incremental learning strategy.
    Initialize Algorithm~\ref{alg:hard_sat} with a small threshold $p \leq p_\ra$. 
    If $\Lbound = 0$, update 
    $p \gets p+\Delta p$ and rerun Algorithm~\ref{alg:hard_sat}.
    Repeat this procedure until $p \geq p_\ra$.
\end{myremark}

\paragraph{Limitations of Hard-SAT Algorithm}
Despite the adaptive refinement and merging strategies, Algorithm~\ref{alg:hard_sat} still relies on state-space partitioning to obtain sound bounds over the entire domain. 
As a result, the number of partitioned cells can grow exponentially with the state dimension, limiting scalability.  Below, we introduce an alternative approach to address this limitation.

\subsection{Scenario-based Training for PAC Guarantees}\label{sec:scenario}

%\ml{I suggest the follow structure for this section:\\- brief overview of scenario\\- formalize scenario optimization\\- theorem on the guarantees.\\The current version is dense and repetitive.}

%\ml{this paragraph needs to be shortened.}
Here, we propose a statistical approach that avoids state-partitioning and relies on randomly sampling a finite set of states $\hat{X} = \{\hat{\px}^{(i)}\in X\}_{i=1}^N$. 
Our method is based on the convex scenario approach, which relaxes the constraints in Eq.~\eqref{eq:ra_constraints} by considering only their evaluation at the sample points in $\hat{X}$. These sampled constraints define the feasible set of a finite-dimensional program, the \emph{scenario program}. Specifically, we construct a convex program on the weights and bias of the last layer of $V_\theta$, denoted $\theta_L$, and on $\beta$. If $N$ is large enough, then our approach guarantees that, with high confidence, the obtained certificate $V_\theta$ (and the associated controller) satisfies the constraints in Eq.~\eqref{eq:ra_constraints} over most of the state space, except for a small region whose volume shrinks with $N$.
%the probability that $\theta^*_N$ violates the constraints of the original robust convex program is at most $\epsilon$. 
%This is a statistical guarantee on the feasibility of $\theta^*_N$ and, when $\theta$ parameterizes the certificate network $V_{\theta}$, it yields a statistical guarantee on the probability that the resulting certificate (and the associated controller) violates the original constraints.
%Intuitively, with high confidence, the synthesized certificate (and the controller) satisfy the constraints in Eq.~\eqref{eq:ra_constraints} over $X\setminus D_\epsilon$, where the measure (volume weighted by $P$) of set $D_\epsilon$ is at most $\epsilon$. 

The scenario approach enables better scalability than the bound-training method in Sec.~\ref{sec:hard_sat} as the sample complexity required to obtain a bound on the volume of the violating region and a confidence level is independent from the dimension of the state space, as formalized below. 
% This is formalized in Theorem~\ref{theorem:scenario_cvx} below.
% \begin{mydef}\label{def:scenario_constraint}
%     given the reach-avoid specification in Definition~\ref{def:ra_spec} with certificate constraints in Eq.~\eqref{eq:ra_constraints},
%     we formulate a single scenario-based constraint to express the constraints in Eq.~\eqref{eq:ra_constraints} as:
%     \begin{multline*}
%     h(\theta, \beta, x) :=\\ -V_\theta(x) + \mathds{1}_{X_\text{init}}(x)(V_\theta(x) - 1) +\mathds{1}_{X_\text{unsafe}}(x)(\beta - V_\theta(x)) + \mathds{1}_{X \setminus (X_\goal \cup X_\unsafe)}(x)(G [V_\theta](x) + \varepsilon_{\mathrm{gen}}).
%     \end{multline*}
% \end{mydef}
%
% Denote by $\theta_L$ the weights and bias of the last layer of the certificate network $V_{\theta}(x)$. 

% We propose to solve Problems~\ref{probl:verification} and~\ref{probl:control_synthesis} with statistical guarantees via a two-stage approach: first, we warm  start both a controller and a certificate 
% % simultaneously 
% using any user-defined method, e.g., soft training by encoding sampled constraints in the loss function. Then, we optimize for $\theta_L$ and $\beta$ via a scenario program, and consider all remaining parameters as fixed.

%\ml{The way the theorem is stated, there is no reason to limit ourselves to just the last layer's parameters. It's not clear where $\theta_L$ show up in the constraints. Why do you claim convex program? What are your assumptions?  the last activation function is convex/linear? can you relate $h$ to the sub-loss function $\L$ in Eq.~\eqref{eq: loss constraints}?}

%
\begin{theorem}\label{theorem:scenario_cvx}
    %\ml{this theorem is too dense}
    Consider System~\eqref{eq:sde} and the pre-trained %certificate and controller 
    networks $V_{\theta}$ and $\pi_{\theta_{\pi}}$. Let $\theta_L \in \mathbb{R}^{d_v}$ denote the weights and bias of the last layer of $V_{\theta}$, whose activation function is linear, and define
    \begin{multline*}
        h(\theta_L, \beta, x) := \relu(-V_\theta(x)) + \mathds{1}_{X_\init}(x)\relu(V_\theta(x) - 1)  \\ + \mathds{1}_{X_\unsafe}(x)\relu(\beta - V_\theta(x)) + \mathds{1}_{X \setminus \text{int}(X_\goal \cup X_\unsafe)}(x)\relu(\mathcal{G} [V_\theta](x) + \varepsilon_{\mathrm{gen}}),
   \end{multline*}
   %\ml{check if it is sufficient for $h < 0$ for all the constraints to hold. Say $V(\bar x) = -1$ for some $\bar{x} \in X_\init$.  Then, $h(\bar{x}) < 0$ but the constraints $V(x) \geq 0$ is violated, no?}
   where $\mathds{1}_{X}(x)$ is the indicator function that returns 1 if $x\in X$ otherwise 0.
   % , and
   % $\bar{X} = X \setminus (X_\goal \cup X_\unsafe)$.
    Given $\epsilon > 0$, $\delta \in (0,1)$, and a probability distribution $P$ on $X$, let $\hat{X} = \{\hat {\px}^{(i)} \in X \}_{i=1}^N$ be a set of random samples from $P$ with $N \ge 2(\log(1/\delta) + d_v)/\epsilon$, and 
    % $\theta^*_{L,N}, \beta_{N}^* \in \arg\min\{ -\beta : h(\theta_L,\beta, \hat {\px}^{(i)}) \le 0 \;\forall i \in \{1, \dots, N\}\}.$
    \begin{equation}\label{eq:rcp1}
    \begin{aligned}
        \theta^*_{L,N}, \beta_{N}^* \in \arg\min_{\theta_L, \beta} -\beta \qquad \textrm{subject to} \qquad h(\theta_L,\beta, \hat {\px}^{(i)}) \le 0 \quad \forall \hat{x}^{(i)} \in \hat{X}.
    \end{aligned}
    \end{equation}
    Then, the optimization problem in Eq.~\eqref{eq:rcp1} is a linear program (LP), and  $P^N\big[P[h(\theta_{L,N}^*, \beta_{N}^*,\px) > 0] \le \epsilon\big] \ge 1- \delta$, where $\px \sim P$. %with confidence no lower than $1-\delta$, $P[h(\theta^*_N, \beta_{\ra,N}^*,x) > 0] \le \epsilon$.
    %\ml{Then, it holds that $P^N\big(P(\forall x \in X, \ h(\theta_{L,N}^*, \beta_{N}^*,x) > 0 \big) \le \epsilon \big) \ge 1- \delta$}
\end{theorem}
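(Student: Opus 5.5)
The proof has two parts: first show that \eqref{eq:rcp1} is an LP, then invoke the scenario bound of Campi and Garatti \citep{campi2009scenario} for convex programs.

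\emph{Linearity.} Write the last layer as $V_\theta(x) = \theta_L^\top \phi(x)$, where $\phi(x) = (\phi_1(x),\dots,\phi_{d_v-1}(x),1)$ collects the outputs of the penultimate layer together with a constant entry for the bias. All earlier layers are frozen, and the output activation is linear, so $V_\theta(x)$ is linear in $\theta_L$ for each fixed $x$. The generator $\mathcal{G}$ in \eqref{eq:generator} is a linear differential operator whose coefficients $f(x,\pi_{\theta_\pi}(x))$ and $g(x)g(x)^\top$ do not depend on $\theta_L$. Hence
\[
\mathcal{G}[V_\theta](x) = \theta_L^\top \psi(x), \qquad \psi(x) := \mathcal{G}[\phi](x)
\]
(applied componentwise), which is again linear in $\theta_L$ once $x$ is fixed. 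Each summand of $h$ is a nonnegative $\relu$ term multiplied by a nonnegative indicator. Therefore $h(\theta_L,\beta,\hat{\px}^{(i)}) \le 0$ holds exactly when every active term vanishes, that is, exactly when the following affine inequalities in $(\theta_L,\beta)$ hold:
\begin{itemize}
\item $-\theta_L^\top\phi(\hat{\px}^{(i)}) \le 0$;
\item $\theta_L^\top\phi(\hat{\px}^{(i)})-1\le 0$ if $\hat{\px}^{(i)}\in X_\init$;
\item $\beta-\theta_L^\top\phi(\hat{\px}^{(i)})\le 0$ if $\hat{\px}^{(i)}\in X_\unsafe$;
\item $\theta_L^\top\psi(\hat{\px}^{(i)})+\varepsilon_{\mathrm{gen}}\le 0$ if $\hat{\px}^{(i)}\in X\setminus\text{int}(X_\goal\cup X_\unsafe)$.
\end{itemize}
The objective $-\beta$ is linear, so \eqref{eq:rcp1} is an LP in $d := d_v+1$ variables. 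Equivalently, $h(\cdot,\cdot,x)$ is convex for every $x$, being a sum of nonnegative multiples of $\relu$s of affine functions, so the sampled constraints define a convex scenario program.

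\emph{PAC bound.} Consider the robust program $\min -\beta$ subject to $h(\theta_L,\beta,x)\le 0$ for all $x\in X$, with the scenarios $\hat{\px}^{(i)}$ drawn i.i.d.\ from $P$. Campi and Garatti's theorem states that if the scenario program is feasible, has a unique solution (or uses a convex tie-break rule), and has $d$ decision variables, then
\[
P^N\big[P[h(\theta^*_{L,N},\beta^*_N,\px)>0]>\epsilon\big]\le \sum_{i=0}^{d-1}\binom{N}{i}\epsilon^i(1-\epsilon)^{N-i}.
\]
Their explicit sufficient condition makes this tail at most $\delta$ whenever $N\ge \frac{2}{\epsilon}\big(\log\frac1\delta + d\big)$. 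The statement uses $d_v$, so the step to check is that the constant matches. Two routes are available. One can absorb the extra variable $\beta$ into the slack of the Chernoff-type bound on the binomial tail. Alternatively, one can fix $\beta$ by a bisection outside the LP. I would most likely prove the binomial tail directly via a Chernoff bound with a fresh constant, treating $\beta$ as part of $\theta_L$'s dimension count if needed.

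\emph{Main obstacle.} The technical conditions of the scenario theorem need care.
\begin{itemize}
\item \textbf{Boundedness.} The LP may be unbounded in $\beta$ when no samples fall in $X_\unsafe$. I would add a box bound on $(\theta_L,\beta)$, or assume that the LP is feasible and bounded with probability one, as is standard.
\item \textbf{Uniqueness.} I would impose a tie-breaking rule, such as a lexicographic order or a small strictly convex regularizer.
\item \textbf{Measurability and feasibility.} The existence of a solution must be assumed or handled as in \citep{campi2009scenario}, where infeasibility is counted as a violation.
\end{itemize}
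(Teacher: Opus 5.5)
Your proposal is correct and follows the same route as the paper's proof. You write $V_\theta$ and $\mathcal{G}[V_\theta]$ as linear in $\theta_L$ through fixed features, note that the nonnegative ReLU terms must all vanish so the sampled constraints are affine and the program is an LP, and then invoke Campi--Garatti; you are in fact more careful than the paper, which cites the sample bound with an unspecified $d$ and never addresses existence, uniqueness, or boundedness of the LP solution.

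On the dimension count, your first route already works with the stated constant: with $d_v+1$ decision variables the scenario tail satisfies $\mathrm{Pr}[\mathrm{Bin}(N,\epsilon)\le d_v]\le 2^{d_v}(1-\epsilon/2)^N\le 2^{d_v}e^{-\epsilon N/2}\le (2/e)^{d_v}\delta\le\delta$ whenever $N\ge 2(\log(1/\delta)+d_v)/\epsilon$. So no fresh constant is needed, and the bisection detour would not help in any case, because a sample-dependent $\beta$ is still effectively a decision variable.
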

The proof of Theorem~\ref{theorem:scenario_cvx} is provided in Appendix~\ref{appendix:proof_scenario_cvx}. Intuitively, Theorem~\ref{theorem:scenario_cvx} guarantees that the synthesized certificate $V_{\theta^*_N}$ (along with the given controller $\pi(x)$) satisfies the constraints in Eq.~\eqref{eq:ra_constraints} %with the reach-avoid probability $\frac{1}{1-\beta_{\mathcal{R}\mathcal{A},N}^*}$
for all $x \in X\setminus D_\epsilon$, where the measure of the violating set is $P(D_\epsilon) \le \epsilon$. 

Based on this result, we propose to solve Problems~\ref{probl:verification} and~\ref{probl:control_synthesis} with statistical guarantees via a two-stage approach: first, warm  start both a controller and a certificate 
% simultaneously 
using any user-defined method, e.g., soft training by encoding sampled constraints in the loss function; then, optimize for $\theta_L$ and $\beta$ via a scenario LP.
% and consider all remaining parameters as fixed.
% 
Note that we only optimize for the weights of the last layer of $V_{\theta}$, but not the remaining ones or controller $\pi_{\theta_{\pi}}$; if those are also made decision variables, $h$ becomes nonlinear, making the results in \citep{campi2009scenario} inapplicable.

%\begin{remark}
    We remark that the scalability of our statistical approach comes at the price of providing weaker guarantees than the ones provided by bound-training (Hard-SAT) in Section~\ref{sec:hard_sat}, as Theorem~\ref{theorem:scenario_cvx} does not rule out existence of a violating set $D_\epsilon$. However, we guarantee that $D_\epsilon$ is very small\footnote{If $P$ is uniform, then the volume ratio of $D_\epsilon$ w.r.t. $X$ is at most $\epsilon$.} and, if the constraints are also satisfied in $D_\epsilon$, then reach-avoid probability $\P \geq 1 - 1/\beta_{N}^*$. 
    % Both approaches are therefore not comparable, and thus 
    Hence, the statistical approach must be seen as a principled way to verify the certificate and controller for high dimensional systems, albeit allowing a small chance of violation.
%\end{remark}

% \iffalse
% \begin{myremark}
%     If $P$ is uniform, then the volume ratio of $D_\epsilon$ w.r.t. $X$ is at most $\epsilon$.
% \end{myremark}
% \fi

% \iffalse
% \begin{myremark}
%     While solving the convex program in Theorem~\ref{theorem:scenario_cvx}, one can initialize $V$ and $u$ networks (if doing control synthesis) by standard training techniques, similar to bound-training.
% \end{myremark}
% \fi
\section{Experiments}

We demonstrate the effectiveness of our approach on several verification and controller synthesis case studies. Specifically, we benchmark neural certificate (Problem~\ref{probl:verification}) on 6 examples and compare to the
% our approaches against 
closest related work \citep{neustroev2025neural}.  
% The results show that
% our Hard-SAT algorithm scales to 5 dimensions and the scenario-based approach scales even higher (10D), whereas the existing method is limited to 2D.
For joint controller-certificate synthesis (Problem~\ref{probl:control_synthesis}), we consider 4 systems of increasing complexity.
Results of the bound-training approach 
% in Sec.~\ref{sec:hard_sat} 
were obtained on an AMD Ryzen 9 7950X CPU with 128 GB RAM, and those of our statistical approach were run on
a single thread of an Intel Core i7 3.6GHz CPU with
32GB of RAM.
Below, we provide a brief discussion of the results, and defer detailed descriptions of the experimental setup, dynamics, plots, and code to the Appendix and Github~\citep{CRANS:github}.

% We evaluate our proposed approaches on several systems on both verification and control synthesis problems. We also compare our results against \citep{neustroev2025neural}

% All experiments ran on an AMD Ryzen 9 7950X (16-core) CPU with 128 GB RAM.
% Training used PyTorch 2.3.1 and Adam optimizer with decaying learning-rate schedule; we used auto-LiRPA~\citep{xu2020automatic} for bound propagation.
% Section~\ref{sec:exp_veri} benchmarks neural verification (Problem~\ref{probl:verification}) on six examples, demonstrating scalability to four dimensions, while the scenario-based approach scales even higher (up to 10-D).
% Section~\ref{sec:exp_syn} demonstrates the joint neural control-certificate synthesis (Problem~\ref{probl:control_synthesis}) on four systems with increasingly complex dynamics.
% We evaluate six benchmarks across 2--4 dimensions: four verification tasks with given controllers (2D/3D/4D Geometric Brownian Motion (GBM) systems and a 2D inverted pendulum system), and three control synthesis tasks where both the certificate function and controller are learned jointly (2D inverted pendulum system, 3D Lorentz system and 3D NASA's XV15 aircraft).
% \se{update paragraph above once all the examples are in}
% \ck{finalize the above paragraph}

\paragraph{Verification}\label{sec:exp_veri}
% First, we study a stochastic inverted pendulum 
% with bounded torque control; Appendix~\ref{appendix:inv_pend} provides the complete specification. 
% We verify the pre-trained controller from~\cite{neustroev2025neural} and compare our results against theirs.
% Next, we consider $n$-dimensional Geometric Brownian Motion (GBM), which has linear drift and \emph{state-dependent} diffusion;
% % \[
% % dx = \left(Ax + \pi(x)\right)dt + g(x) \, dw,
% % \]
% % where $A \in \mathbb{R}^{n \times n}$ is tri-diagonal and $g(x) = 0.2 \text{diag}(x)$ is state-dependent diffusion, and $\pi(x) = -x$ is a given stabilizing controller. 
% The detailed dynamics and reach--avoid specifications are given in Appendix~\ref{appendix:n_GBM}.
% We benchmark verification scalability on GBM because its state dimension can be easily increased without changing the system structure.
%

We first consider the inverted pendulum with bounded torque control (Appendix~\ref{appendix:inv_pend}) under a pre-trained controller from~\citep{neustroev2025neural}.
We then consider $n$-dimensional Geometric Brownian Motion (GBM) with linear drift and state-dependent diffusion (Appendix~\ref{appendix:n_GBM}) to benchmark scalability, as its state dimension can be increased without altering the SDE structure.

\begin{table}[t]
\centering
\vspace{-2mm}
\caption{Verification benchmark, reporting mean runtime (seconds) and mean partition size $|\partition|$ over 5 runs. ``OM'' indicates out-of-memory termination (max $|\partition|$ reported), and ``--'' denotes skipped cases after a lower-dimensional instance hit time/memory limits. 
% Results are shown up to the largest SAT dimension. 
% Entries are marked with ``*'' if only \textcolor{red}{four} runs were attempted due to available time.
}
\label{tab:verification_methods_results}
\resizebox{\textwidth}{!}{%
\begin{tabular}{l|rrr|rrr|clc|clc}
\toprule
\multirow{2}{*}{\textbf{Systems}}
& \multicolumn{3}{c|}{\textbf{Neustroev et al.}}
& \multicolumn{3}{c|}{\textbf{Hard-SAT}}
& \multicolumn{6}{c}{($N=10^5$) \; \textbf{Scenario-based Training} \; ($N=10^6$)}
% & \multicolumn{3}{c}{\textbf{PAC Training} ($10^6$)}
\\
\cmidrule(lr){2-4}
\cmidrule(lr){5-7}
\cmidrule(lr){8-13}
& Time & $|\partition|$ \ \ \ & $p_\ra$
& Time & $|\partition|$ \ \ \ & $p_\ra$
% & Pre/Sat & $N$
& Time & $p_\ra$ & $\epsilon$ 
& Time & $p_\ra$ & $\epsilon$
\\
\midrule
2D Inv Pen & 128.8 & 49,306 & 0.95 & 422.9 & 9,227 & 0.95 & 2.7 & 0.997 & 5.6e-4 & 29.7 & 0.997 & 5.6e-5 \\

2D GBM & 53.5 & 92,911 & 0.95 & 13.9 & 607 & 0.95 & 9.5 & 0.99991 & 1.3e-3 & 92.2 & 0.999991 & 1.3e-4 \\

3D GBM & OM & 434,186 & OM & 640.3 & 7,319 & 0.95 & 9.4 & 0.99991 & 1.3e-3 & 92.7 & 0.99991 & 1.3e-4 \\

4D GBM & -- & -- & -- & 4,241.4 & 61,228 & 0.95 & 2.8 & 0.99991 & 5.6e-4 & 30.6 & 0.99991 & 5.6e-5 \\

5D GBM & -- & -- & -- & 114,282.0 & 342,208 & 0.95 & 2.8 & 0.999 & 5.6e-4 & 30.7 & 0.999 & 5.6e-5 \\

10D GBM & -- & -- & -- & -- & -- & -- & 9.5 & 0.99994 & 1.3e-3 & 94.2 & 0.999994 & 1.3e-4 \\
\bottomrule
\end{tabular}%
}
\vspace{-2mm}
\end{table}

Table~\ref{tab:verification_methods_results} summarizes the verification benchmark of our methods against the state-of-the-art certificate method for SDEs \citep{neustroev2025neural}\footnote{For a fair comparison, we modified the code from~\citep{neustroev2025neural} to remove \emph{stay}-related training and checks.}, which performs soft training of $V_\theta$ and then verifies the constraints in a post-process via partitioning.
For the scenario-based approach, we evaluate two dataset sizes ($N = 10^5$ and $N = 10^6$) with confidences $1-10^{-9}$ (i.e., $\delta = 10^{-9}$)
to examine the trade-off between higher accuracy (smaller $\epsilon$) and computational cost.
% , so all methods are evaluated under the same reach--avoid constraints.
% Our methods (Bound Training and Scenario-based Training) appear in the second and third columns. 
% For bound training, we report computation time (in seconds), number of discrete cells, and certified reach-avoid probability $p_\ra$, averaged over five training trials\footnote{since neural network initialization is random.} 
%\textcolor{blue}{IG: what about stochastic gradient descent?}\ck{that is included in the "initialization"; as we remark that the bound-training does not use any sample, thus no stochastic gradient descent involved.}} for each experiment.

For the 2D systems, our Hard-SAT method requires significantly fewer partitions than\allowbreak~\cite{neustroev2025neural}, with comparable computation times. This efficiency enables bound-training to scale to higher-dimensional GBMs (3D-5D), whereas~\citep{neustroev2025neural} fails to find a valid certificate.

Although outperforming state of the art and providing hard guarantees (i.e., the certificate constraints hold over the entire $X$), bound-training scalability is still limited to 5D. In contrast, the scenario-based approach achieves $p_\ra \approx 1$ across all cases and scales easily to 10D, with the guarantee that, with confidence $1 - 10^{-9}$, all certificate constraints hold over $X \setminus D_\epsilon$. Increasing the sample size from $N = 10^5$ to $10^6$ reduces $\epsilon$ (shrinks $D_\epsilon$) by an order of magnitude, at the cost of a corresponding increase in computation time.
Finally, to demonstrate the effectiveness of optimizing the certificate's last-layer $\theta_L$, we attempted to solve the scenario program using the pre-trained $V_\theta$ without weight optimization. The program was infeasible in every case study, i.e., at least one sample in $\hat{X}$ violated a constraint, indicating that pre-training alone did not result in a valid certificate.

\paragraph{Joint controller-certificate synthesis}\label{sec:exp_syn}

We evaluated 4 systems in an increasing level of complexity.
First, a 2D stochastic inverted pendulum with torque limits, where the controller must swing up from near-down to near-upright while avoiding high angular rates and full rotations. While many learning-based controllers perform well in deterministic settings, few provide guarantees under stochastic noise.
Second, a 2D GBM example with a goal set that excludes the origin, preventing the stabilizing drift dynamics from reaching the goal with high probability.
Third, a stochastic 3D Lorenz system with a reach-avoid task adapted from~\citep{edwards2025general}. The deterministic system is already chaotic~\citep{lorenz2017deterministic}; stochastic noise further complicates control and certification.
Finally, we consider NASA’s XV-15 tilt-rotor aircraft~\citep{ferguson1988mathematical}, focusing on its 3D longitudinal dynamics with bounded inputs, requiring a safe transition from near-hover to forward flight under complex aerodynamics and stochastic disturbances.
Experimental details are in Appendix~\ref{appendix:exp}.

\begin{wraptable}{r}{0.4\textwidth}
  \centering
  \vspace{-5mm}
  \caption{Joint controller-certificate synthesis using Hard-SAT Algorithm. 
  % Metrics: runtime (seconds) and partition size $|\partition|$. 
  }
  % reach-avoid probability threshold $p_\ra$, and Monte Carlo reach-avoid probability $\tilde{\mathrm{P}}_{\ra}^\pi$.
  \label{tab:synthesis_methods_results}
  \vspace{-7mm}
  \scalebox{0.9}{
  \begin{tabular}{lrr}\\
  \toprule
  \textbf{System} & \textbf{Time} (s) & $\mathbf{|\partition|}$ \\
  % & \textbf{NN Controller} & $\boldsymbol{p_\ra}$ & $\tilde{\mathrm{P}}_{\ra}^\pi$ \\
  \midrule
  2D Inv. Pend. & 290.8 & 3,944  \\
  % & Table~\ref{tab:2d_inv_pend_u_nn}  & 0.95 & 1.0  \\
  2D GBM Non-Eq.  & 341.7 & 1,681 \\
  % & Table~\ref{tab:2d_gbm_u_nn} & 0.95 & 1.0 \\
  % 3D Lorentz     & 7,371.1 + 6,359.7 & 40,238 + 28,264 & Table~\ref{tab:lorentz_u_nn}  & 0.95 & 1.00 \\
  3D Lorenz     & 14,175.8 & 68,502 \\
  % & Table~\ref{tab:lorentz_u_nn}  & 0.95 & 1.0 \\
  % 3D XV15       & 28,045.3 + 53,482.7 & 115,332 + 52,079 & Table~\ref{tab:xv15_u_nn} & 0.95 & 1.00\\
  3D XV15       & 82,071.3 & 167,411 \\
  % & Table~\ref{tab:xv15_u_nn} & 0.95 & 1.0
  \bottomrule
  \end{tabular}
  }
  \vspace{-5mm}
\end{wraptable}

Table~\ref{tab:synthesis_methods_results} summarizes the results. In every case study, we set the reach-avoid probability threshold to $p_\ra = 0.95$, and our Hard-SAT algorithm was able to successfully train neural controllers and corresponding certificates. Monte Carlo simulations yielded an empirical reach-avoid probability of 1.0 for all closed-loop systems.

We note that a direct comparison with existing methods is not possible, as they fall outside our setting of joint synthesis of \emph{generic neural} controllers for stochastic reach-avoid problems. For instance, the SMT-based approach of~\citep{edwards2025general} assumes deterministic dynamics, and its extension to SDEs remains an open problem.

Figure~\ref{fig:2D GBM non eq} visualizes the 2D GBM synthesis results; due to page limit, all the other plots are provided in Appendix~\ref{appendix:ctrl_syn_visual}. On the left, the learned certificate rises sharply near $X_\unsafe$ and decreases toward $X_{\goal}$, while the corresponding generator $\mathcal{G}[V_\theta]$ is strictly negative. 
On the right, five simulated rollouts are overlaid on the contour of $V_\theta$, comparing the synthesized controller (pink) against open-loop behavior (aquamarine dashed), with the associated control inputs shown alongside. Without control, the trajectories spiral toward the origin, while the controlled rollouts head directly to $X_\goal$.
\begin{figure}[!h]
  \centering
  \begin{minipage}[t]{0.24\textwidth}
    \centering
\includegraphics[width=\linewidth]{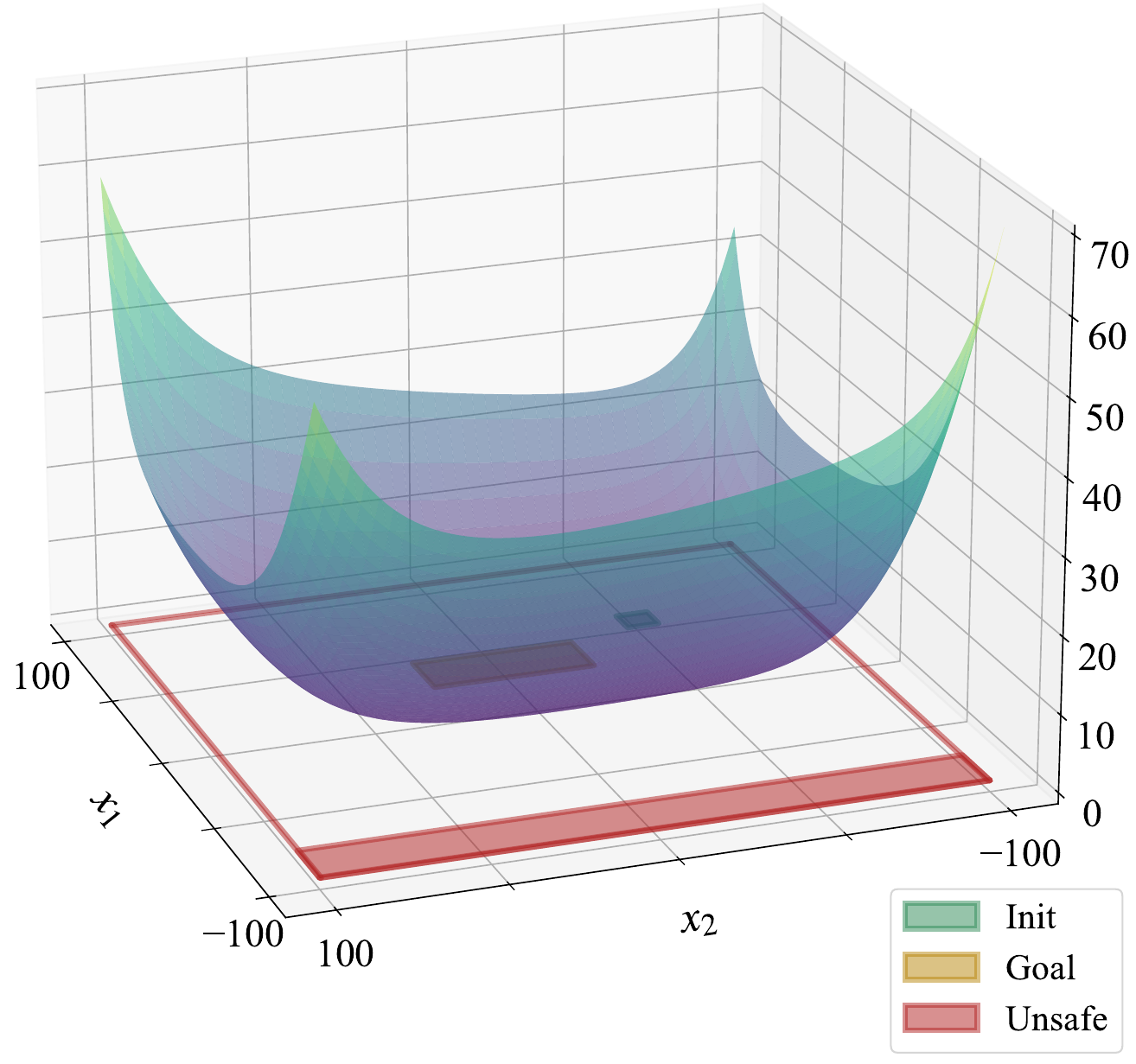}
    \vspace{0.0em}
    {\small (a)}
  \end{minipage}
  \begin{minipage}[t]{0.27\textwidth}
    \centering
\includegraphics[width=\linewidth]{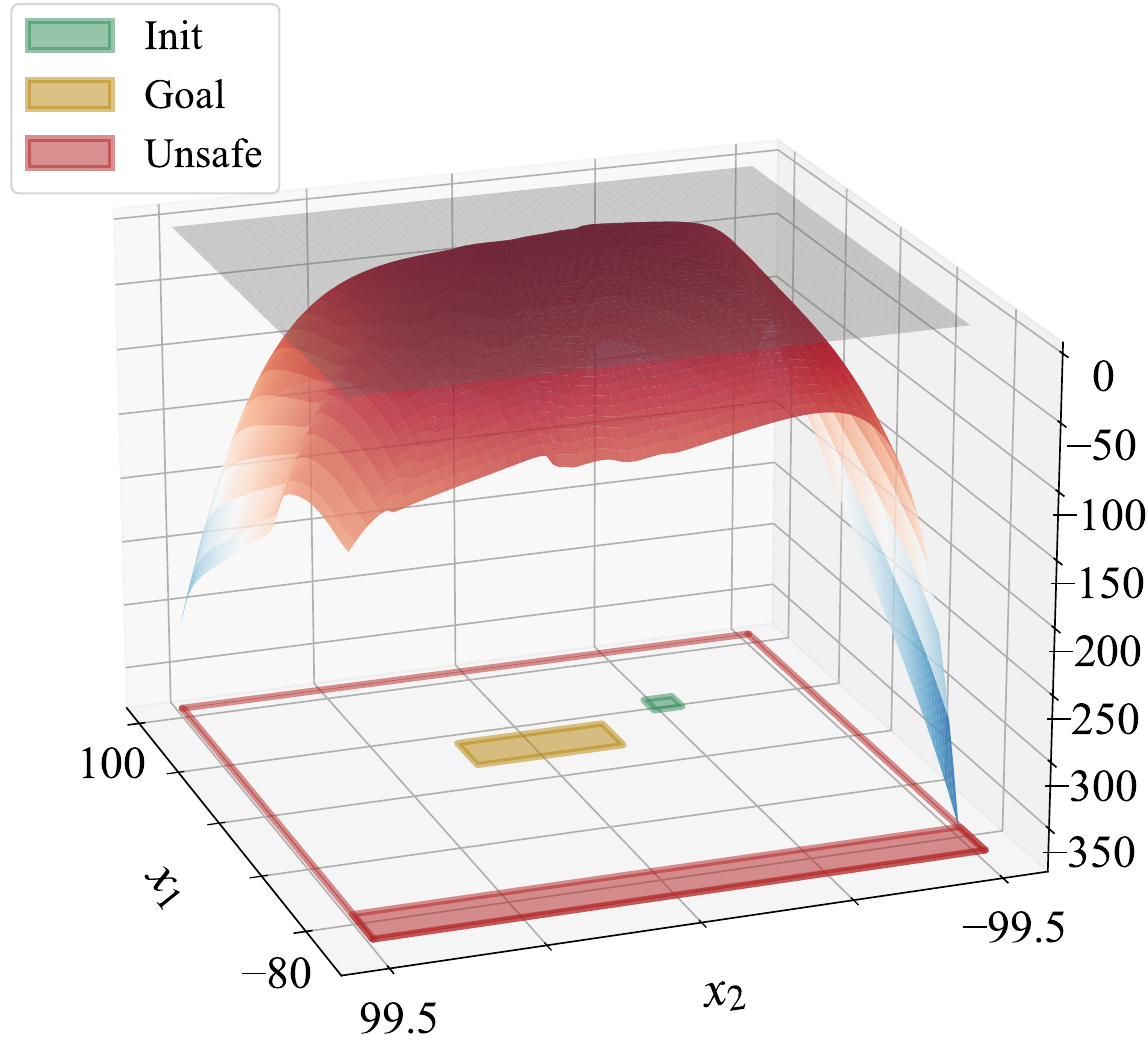}
    \vspace{0.0em}
    {\small (b)}
  \end{minipage}
  \begin{minipage}[t]{0.43\textwidth}
    \centering
\includegraphics[width=\linewidth]{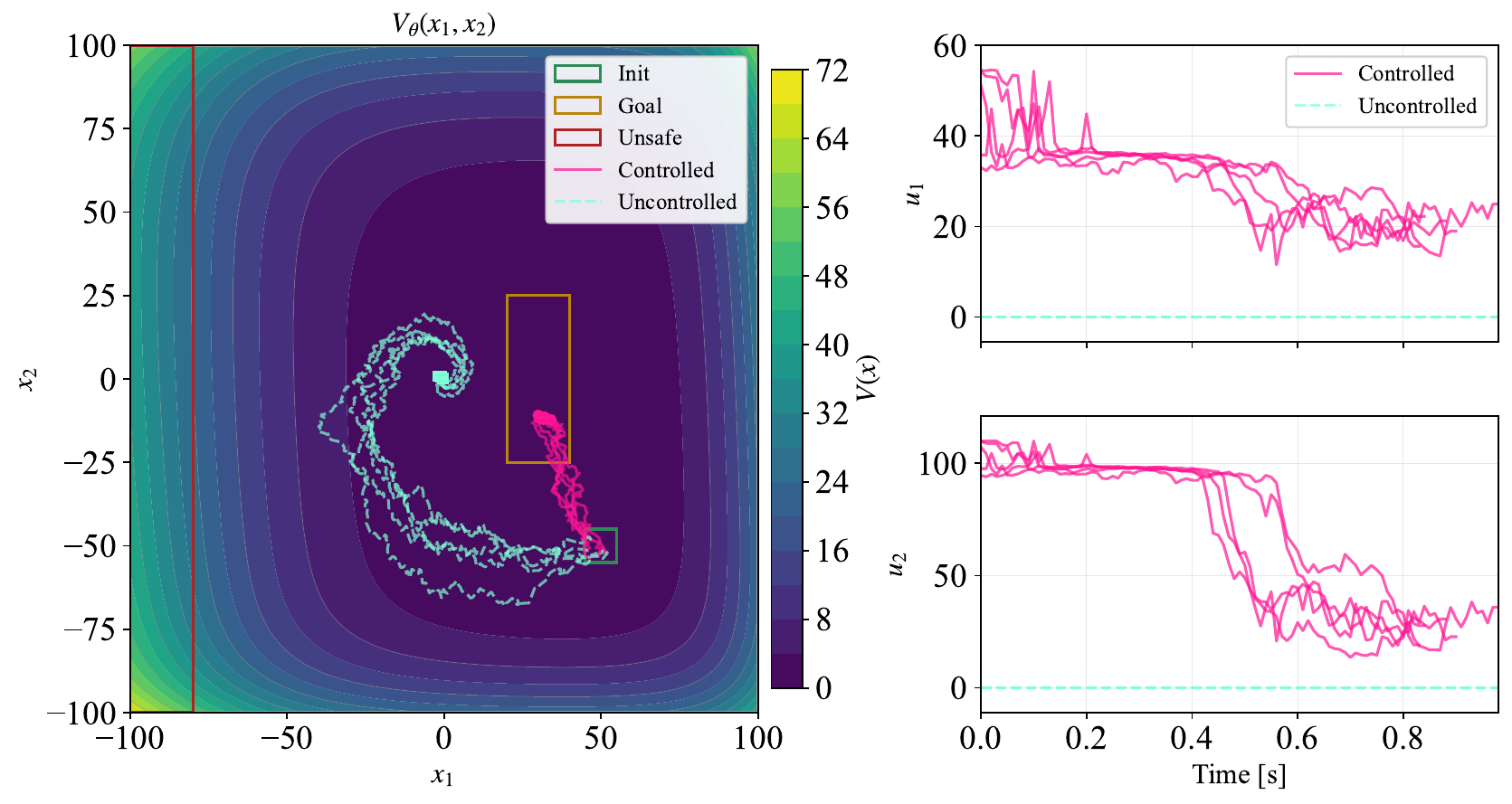}
    \vspace{0.0em}
    {\small (c)}
  \end{minipage}
  \vspace{-2mm}
  \caption{2D GBM synthesis: (a) $V_{\theta}$, (b) $\mathcal{G}[V_{\theta}]$, (c) trajectories (controlled vs.\ uncontrolled).}
  \label{fig:2D GBM non eq}
  \vspace{-2mm}
\end{figure}

\vspace{-1.5em}
\section{Conclusion}
We propose two hard-constrained training frameworks. The first, Hard-SAT, trains a neural reach-avoid certificate for SDEs by enforcing the certificate inequalities through a bound-based loss obtained via domain discretization; once the loss reaches zero, the neural certificate is valid. Given a controller, Hard-SAT verifies SDEs up to 5D, outperforming state of the art. Hard-SAT also enables joint learning of a neural controller and certificate with guaranteed validity upon zero loss.
% , demonstrated up to 3D. 
% For higher-dimensional systems where partitioning is prohibitive, we introduce a scenario-based approach for training, which easily scales to 10D with arbitrarily-tight PAC guarantee. Our results show that, pre-training alone fail to produce a valid certificate, whereas our scenario-based training enabled certification.
For higher-dimensional systems where partitioning becomes prohibitive, we introduce a scenario-based training approach that scales to 10D with tight PAC guarantees. Results show pre-training alone fails to yield a valid certificate, whereas our scenario-based training achieves valid PAC certificate.

% \acks{Acknowledgements text.}

\bibliography{ref}

\appendix
\section{Derivation of analytical $\mathcal{G}[V]$}\label{appendix:gv_derive}
Let $V_{\theta}$ be a two-hidden layer sigmoid network with explicit input and output scaling:
\begin{align*}    x_{\text{norm}} &= x \oslash s_{\text{in}}, \\
z^{(1)} &= W_1 x_{\text{norm}} + b_1, \quad h^{(1)} = \sigma(z^{(1)}), \\
z^{(2)} &= W_2 h^{(1)} + b_2, \quad h^{(2)} = \sigma(z^{(2)}), \\
V_{\theta}(x) &= s_{\text{out}} W_3^T h^{(2)},
\end{align*}
where $\oslash$ denotes the element-wise division, $x \in \mathbb{R}^D$, $W_1 \in \mathbb{R}^{m_1 \times D}$, $W_2 \in \mathbb{R}^{m_2 \times m_1}$, $W_3 \in \mathbb{R}^{m_2}$, $b_1 \in \mathbb{R}^{m_1}$, $b_2 \in \mathbb{R}^{m_2}$, $s_{\text{in}} \in \mathbb{R}^D$, $\mathbf{s}_{\text{out}} \in \mathbb{R}$, $\sigma(z) = 1/(1+e^{-z})$ is the sigmoid activation, and $\theta=\{W_1,b_1,W_2,b_2,W_3\}$.
To compute the bounds on the infinitesimal generator in Eq.~\eqref{eq:generator}, we construct a single feedforward network $\Phi:\mathbb{R}^D \rightarrow \mathbb{R}$ that analytically implements the generator $\Phi(x) = \mathcal{G}[V_{\theta}](x)$. This transforms the problem of bounding derivatives into bounding a standard feedforward network. 
By Eq.~\eqref{eq:generator}, it suffices to derive explicit expression of the first and second derivatives of $V_{\theta}$ with respect to $x$, as detailed below.

To compute $\partial V/\partial x_i$, we apply the chain rule through the network layers. Let $d^{(1)} = \sigma'(z^{(1)})$ and $d^{(2)} = \sigma'(z^{(2)})$ denote element-wise first derivatives of the hidden activations, where $\sigma'(z) = \sigma(z)(1-\sigma(z))$. The gradient with respect to normalized coordinates is:
\[
    \frac{\partial V}{\partial x_{\text{norm},i}} = s_{\text{out}} \sum_{j=1}^{m_2} W_3[j] \cdot d^{(2)}_j \sum_{k=1}^{m_1} W_2[j,k] \cdot d^{(1)}_k \cdot W_1[k,i],
\]
where $W[i,j]$ denotes the $i$-$j$ entry of the matrices $W_1$ and $W_2$, and $W[k]$ denotes the $k$-th element of the vector $W_3$.
Applying the input scaling chain rule gives:
\[
    \frac{\partial V}{\partial x_i} = \frac{1}{s_{\text{in},i}} \frac{\partial V}{\partial x_{\text{norm},i}} = \frac{s_{\text{out}}}{s_{\text{in},i}} \sum_{j=1}^{m_2} W_3[j] \cdot d^{(2)}_j \sum_{k=1}^{m_1} W_2[j,k] \cdot d^{(1)}_k \cdot W_1[k,i]
\]

To compute $\partial^2 V/\partial x_i^2$, let $q^{(1)} = \sigma''(z^{(1)})$ and $q^{(2)} = \sigma''(z^{(2)})$ denote element-wise second derivatives, where $\sigma''(z) = (1-2\sigma(z))\sigma'(z)$. Define the intermediate quantity $S_{j,k,i} \triangleq W_2[j,k] \cdot d^{(1)}_k \cdot W_1[k,i]$, allowing us to write:
\[
    \frac{\partial V}{\partial x_i} = \frac{s_{\text{out}}}{s_{\text{in},i}} \sum_{j=1}^{m_2} W_3[j] \cdot d^{(2)}_j \sum_{k=1}^{m_1} S_{j,k,i}
\]
Differentiating with respect to $x_i$ using the product rule yields two contributions. First, differentiating the outer layer activation derivative $d^{(2)}_j$ gives $\partial d^{(2)}_j/\partial x_i = (q^{(2)}_j/s_{\text{in},i}) \sum_{k=1}^{m_1} S_{j,k,i}$, contributing the cross term:
\[
    \frac{s_{\text{out}}}{s_{\text{in},i}^2} \sum_{j=1}^{m_2} W_3[j] \cdot q^{(2)}_j \left(\sum_{k=1}^{m_1} S_{j,k,i}\right)^2
\]
Second, differentiating the inner layer activation derivative $d^{(1)}_k$ within $S_{j,k,i}$ gives $\partial d^{(1)}_k/\partial x_i = (q^{(1)}_k/s_{\text{in},i}) W_1[k,i]$, contributing the direct term:
\[
    \frac{s_{\text{out}}}{s_{\text{in},i}^2} \sum_{j=1}^{m_2} W_3[j] \cdot d^{(2)}_j \sum_{k=1}^{m_1} W_2[j,k] \cdot q^{(1)}_k \cdot W_1[k,i]^2
\]
Summing both contributions yields the total second derivative:
\[
    \frac{\partial^2 V}{\partial x_i^2} = \frac{s_{\text{out}}}{s_{\text{in},i}^2} \left[ \sum_{j=1}^{m_2} W_3[j] \cdot q^{(2)}_j \left(\sum_{k=1}^{m_1} S_{j,k,i}\right)^2 + \sum_{j=1}^{m_2} W_3[j] \cdot d^{(2)}_j \sum_{k=1}^{m_1} W_2[j,k] \cdot q^{(1)}_k \cdot W_1[k,i]^2 \right].
\]
% The infinitesimal generator combines these derivatives with the drift $f_{\pi}$ and diffusion $g_{\sigma}$:
% \[
%     \Phi(x) = \sum_{i=1}^{D} f_{i}(x) \frac{\partial V}{\partial x_i} + \frac{1}{2}\sum_{i=1}^{D}\sum_{r=1}^{k} g_{\sigma,r}^2(x,i) \frac{\partial^2 V}{\partial x_i^2}
% \]
% All operations form a computational graph with no trainable parameters beyond those of $V$. Auto-LiRPA efficiently propagates IBP bounds through this graph, yielding rigorous bounds $[\underline{\Phi}(C), \overline{\Phi}(C)]$ for any cell $C$. During training, gradients flow through $\Phi$ to update $V$'s parameters.
\section{Proof}

\subsection{Proof of Theorem~\ref{prop:consistency}}\label{appendix:proof_hard_bound_consist}

\begin{proof}
Fix a certificate $V^\star_{\theta}$ satisfying Theorem~\ref{theorem:ra_constraints}
\emph{strictly} with $\beta=\frac{1}{1-p_\ra}$.
Let \(\{\partition^{(k)}\}_{k\ge 1}\) be a sequence of partitions of \(X\), where
\(\partition^{(k+1)}\) is obtained from \(\partition^{(k)}\) by splitting one or more cells such that
\(\mathrm{diam}(\partition^{(k)})\to 0\) as \(k\to\infty\).

By continuity on compact sets ($X,X_\init,X_\unsafe$ and $\bar{X}$) and strict satisfaction, each inequality in
Theorem~\ref{theorem:ra_constraints} holds with a positive margin. Define
\[
\varepsilon \triangleq \min\Big\{
\min_{x\in X} V^\star_\theta(x),\
\min_{x\in X_{\init}} (1-V^\star_\theta(x)),\
\min_{x\in X_{\unsafe}} (V^\star_\theta(x)-\beta),\
\min_{x\in \bar X} (-\mathcal{G}[V^\star_\theta](x))
\Big\} > 0.
\]
Then
$V^\star_{\theta}\ge \varepsilon$ on $X$, $V^\star_{\theta}\le 1-\varepsilon$ on $X_{\init}$,
$V^\star_{\theta}\ge \beta+\varepsilon$ on $X_{\unsafe}$, and
$\mathcal{G}[V^\star_{\theta}]\le -\varepsilon$ on $\bar X$.

By continuity, these strict inequalities extend with margin $\varepsilon/2$ to open neighborhoods of
$X_{\init}$, $X_{\unsafe}$, and $\bar X$, and since $\mathrm{diam}(\partition^{(k)})\to 0$,
for all sufficiently large $k$, every cell intersecting one of these sets is contained in the corresponding neighborhood.
% This literally means: there's an is an open neighborhood of $X_\init$, denoted as $N(X_{\init})$, on which the inequality holds with margin $\varepsilon/2$ (smaller margin), i.e., $X_\init \subset N(X_{\init})$.
% Then by basic lemma, we know: For all sufficiently large $k$, every cell intersecting $X_\init$ is contained in $N(X_{\init}).$
By soundness and consistency of the certified bounds under refinement, together with continuity,
$\underline V^\star_\theta(\region)$, $\overline V^\star_\theta(\region)$, and
$\overline{\mathcal{G}[V^\star_\theta]}(\region)$ become tight as $\mathrm{diam}(\region)\to 0$.
Fix any $0<\varepsilon_{\mathrm{gen}}<\varepsilon/2$. Then, for all sufficiently large $k$
and all $\region\in\partition^{(k)}$,
\[
\underline V^\star_\theta(\region)\ge 0,\quad
\overline V^\star_\theta(\region)\le 1 \ \text{if }\region\cap X_{\init}\neq\emptyset,\quad
\underline V^\star_\theta(\region)\ge \beta \ \text{if }\region\cap X_{\unsafe}\neq\emptyset,
\]
and
\[
\overline{\mathcal{G}[V^\star_\theta]}(\region)+\varepsilon_{\mathrm{gen}}\le 0
\ \text{if }\region\cap \bar X\neq\emptyset.
\]
Therefore every ReLU term in \eqref{eq:ra_hard_loss-b}--\eqref{eq:ra_hard_loss-e} is zero for all sufficiently large $k$,
and hence
\[
\lim_{k\to\infty}\mathcal{L}^{\partition^{(k)}}_{\mathrm{bound}}(V^\star_{\theta},\beta)=0.
\]
\end{proof}

\subsection{Proof of Theorem~\ref{theorem:scenario_cvx}}\label{appendix:proof_scenario_cvx}
\begin{proof}
Express $V_{\theta_L}(x) = \sum_{j = 1}^{m_2} \theta_L^j \phi_j(x)$, where the feature terms $\phi_j$ depend on the architecture of the certificate network and the weights of all but the last layer. 
Also express $\mathcal{G}[V_{\theta_L}](x) = \sum_{j = 1}^{m_2} \theta_L^j \psi_j(x)$, where the terms $\psi_j$ depends on the certificate neural network, its weights besides the last layer, as well as the SDE dynamics. 
Since these expressions are linear in $\theta$, the following program
%
% Source - https://tex.stackexchange.com/a
% Posted by barbara beeton
% Retrieved 2026-01-26, License - CC BY-SA 3.0

\begin{equation}
\label{eq:rcp}
\begin{aligned}
\min_{\theta_L, \beta_{\mathcal{R}\mathcal{A}}} \quad & -\beta_{\mathcal{R}\mathcal{A}}\\
\textrm{s.t.} \quad & h(\theta_L,\beta_{\mathcal{R}\mathcal{A}}, x) \le 0 \:,\forall x\in X,
\end{aligned}
\end{equation}
is convex. Furthermore, since its feasible set is equivalent to the constraints in Eq.~\eqref{eq:ra_constraints}, any feasible solution to Problem~\eqref{eq:rcp} is a valid certificate ensuring a  reach-avoid probability $1-1/\beta_{\mathcal{R}\mathcal{A}}$, which the problem aims to maximize. Let $P$ be a probability distribution on $X$ and $\{\hat{\px}^{(i)}\}_{i=1}^N$ be a set of $N$ samples from $P$. Consider the following convex optimization Problem~\eqref{eq:rcp}
\begin{equation}
\label{eq:scp}
\begin{aligned}
\min_{\theta_L, \beta_{\mathcal{R}\mathcal{A}}} \quad & -\beta_{\mathcal{R}\mathcal{A}}\\
\textrm{s.t.} \quad & h(\theta_L,\beta_{\mathcal{R}\mathcal{A}}, \hat{\px}^{(i)}) \le 0 \:,\forall i \in \{1, \dots, N\},
\end{aligned}
\end{equation}
where the robust constraint in Problem~\eqref{eq:rcp} is relaxed to hold only on the set of $N$ samples. Note that, by the properties of the ReLU function, the feasible set is equivalent to a set of linear inequalities, thus making the problem linear \cite{boyd2004convex}. Let $\theta^*_{L,N}$ and $\beta_{\text{RA},N}^*$ be optimizers to Problem~\eqref{eq:scp}. Given $\epsilon >0$ and $\delta \in (0,1)$, the theory of the convex scenario approach (\cite{campi2009scenario}) states that if the number of samples $N$ is bigger than $2(\log(1/\delta) + d)/\epsilon$, then the probability $P[h(\theta^*_{L,N}, \beta_{\text{RA},N}^*,x) > 0]$ that the optimizers violate the constraint $h$ when $x$ is sampled according to $P$ is not higher than $\epsilon$, with confidence at least $1-\delta$ over the choice of the $N$ samples. %Intuitively, this means that, with confidence no lower than $1-\delta$, the synthesized certificate $V_{\theta^*_{L,N}}$ satisfies the constraints in (REF) for all $x \in X$, except for a region $D_\epsilon$ whose measure $P(D_\epsilon)$ is upper bounded by $\epsilon$.
\end{proof}
\section{Experiment Details}\label{appendix:exp}

\subsection{2D Stochastic Inverted Pendulum}\label{appendix:inv_pend}
The SDE is:
\[
    dx = \begin{bmatrix}
        x_2 \\
        \frac{g}{L} \sin x_1 + \frac{M \pi(x) - b x_2}{mL^2}
    \end{bmatrix}dt + \begin{bmatrix}
        0 \\ \sigma
    \end{bmatrix}dw,
\]
where $x_1$ is the angle, $x_2$ is the angular velocity, the diffusion is $\sigma=2$, $g=9.81$ is the gravity constants, $L=0.5$ is the pendulum length, $m=0.15$ is the ball mass, $b=0.1$ is the friction coefficient, $\pi: \mathbb{R}^2 \rightarrow [-1,1]$ outputs the normalized control torque, and $M=6$ is the maximum torque.
For verification, the reach--avoid specification is:
\begin{align*}
    & X = [-2\pi, 2\pi] \times [-20,20],\; X_\init = [\frac{3\pi}{4}, \frac{5\pi}{4}] \times [-1,1],\\
    & X_\goal = [-\frac{\pi}{2}, \frac{\pi}{2}] \times [-4,4],\; X_\safe = \text{Inter}(X) \setminus \Big(  [-2\pi, -\frac{3\pi}{2}] \times [-20,-10] \Big) \cup \Big( [\frac{3\pi}{2}, 2\pi] \times [10,20] \Big),
\end{align*}
with 95\% probability~\citep{neustroev2025neural}. The controller, given by~\cite{neustroev2025neural}, is a multilayer perceptron (MLP) with three hidden layers (64 neurons) and Tanh activation.

\paragraph{2D Stochastic Inverted Pendulum Control Synthesis}
The architecture of the neural controller is given in Table~\ref{tab:2d_inv_pend_u_nn}.
\begin{table}[htbp!]
    \centering
    \begin{tabular}{lccc}
        \toprule
        \textbf{Layer Connection} & \textbf{Type} & \textbf{\# Neurons} & \textbf{Activation Function} \\
        \midrule
        Input Layer $\rightarrow$ Hidden Layer 1  & Fully Connected & 8  & Tanh \\
        Hidden Layer 1 $\rightarrow$ Output Layer  & Fully Connected & 1  & Tanh \\
        \bottomrule
    \end{tabular}
    \caption{Neural controller of the 2D stochastic inverted pendulum.}
    \label{tab:2d_inv_pend_u_nn}
\end{table}

\subsection{Geometric Brownian Motion (GBM)}\label{appendix:n_GBM}
The SDE of an $n$-dimensional GBM is:
\[
dx = \left(Ax + \pi(x)\right)dt + g(x) \, dw,
\]
where $A \in \mathbb{R}^{n \times n}$ is tri-diagonal: 
\[
A_{ij}=
\begin{cases}
-0.5, & i=j,\\
1, & j=i+1,\\
-1, & i=j+1,\\
0, & \text{otherwise},
\end{cases}
\]
and $g(x) = 0.2 \text{diag}(x)$ is state-dependent diffusion, and $\pi(x) = -x$ is the given stabilizing controller for the verification problem.
\paragraph{2D GBM}
% \[
%     A = \begin{bmatrix}
%         -0.5 & 1 \\
%         -1 & -0.5
%     \end{bmatrix}
% \]
The reach--avoid specification is given by:
\begin{align*}
    & X = [-100,100]^2,\; X_\init = [45,55] \times [-55, -45],\;
    X_\goal = [-25,25]^2,\; \\
    & X_\safe = \text{Inter}(X) \setminus \Big( [-100,-80] \times [-100, 100] \Big)
\end{align*}
with 95\% probability~\citep{neustroev2025neural}.

\paragraph{2D GBM Control Synthesis}
The architecture of the neural controller is given in Table~\ref{tab:2d_gbm_u_nn}.
\begin{table}[htbp!]
    \centering
    \begin{tabular}{lccc}
        \toprule
        \textbf{Layer Connection} & \textbf{Type} & \textbf{\# Neurons} & \textbf{Activation Function} \\
        \midrule
        Input Layer $\rightarrow$ Hidden Layer 1  & Fully Connected & 16  & Tanh \\
        Hidden Layer 1 $\rightarrow$ Output Layer  & Fully Connected & 2  & N/A \\
        \bottomrule
    \end{tabular}
    \caption{Neural controller of the 2D GBM.}
    \label{tab:2d_gbm_u_nn}
\end{table}
The reach--avoid specification has a goal set that does not include the origin:
\begin{align*}
    & X = [-100,100]^2,\; X_\init = [45,55] \times [-55, -45],\;
    X_\goal = [20.0, 40.0] \times [-25,25],\; \\
    & X_\safe = \text{Inter}(X) \setminus \Big( [-100,-80] \times [-100, 100] \Big).
\end{align*}

\paragraph{3D GBM}
% \[
%     A = \begin{bmatrix}
%         -0.5 & 1 & 0 \\
%         -1 & -0.5 & 1 \\
%          0 & -1 & -0.5
%     \end{bmatrix}
% \]
The reach-avoid specification is given by
\begin{align*}
    & X = [-100,100]^3,\;
    X_\init = [45,55] \times [-55, -45] \times [50,60], \\
    & X_\goal = [-25,25]^3,\;
    X_\safe = \text{Inter}(X) \setminus \Big( [-100,-80] \times [-100, 100] \times [-100, -80] \Big),
\end{align*}
with $\mathcal{P}_\ra = 0.95$.

\paragraph{4D GBM}
% \[
%     A = \begin{bmatrix}
%         -0.5 & 1 & 0 & 0 \\
%         -1 & -0.5 & 1 & 0 \\
%          0 & -1 & -0.5 & 1 \\
%          0 & 0 & -1 & -0.5
%     \end{bmatrix}
% \]
The reach-avoid specification is given by
\begin{align*}
    & X = [-100,100]^4,\;
    X_\init = [45,55] \times [-55, -45] \times [50,60] \times [45, 55], \\
    & X_\goal = [-25,25]^4,\;
    X_\safe = \text{Inter}(X) \setminus \Big( [-100,-80] \times [-100, 100] \times [-100, -80]^2 \Big),
\end{align*}
with $\mathcal{P}_\ra = 0.95$.

\paragraph{5D GBM}
% \[
%     A = \begin{bmatrix}
%         -0.5 & 1 & 0 & 0 & 0 \\
%         -1 & -0.5 & 1 & 0 & 0\\
%          0 & -1 & -0.5 & 1 & 0\\
%          0 & 0 & -1 & -0.5 & 1\\
%          0 & 0 & 0  & -1 & -0.5
%     \end{bmatrix}
% \]
The reach-avoid specification is given by
\begin{align*}
    & X = [-100,100]^5,\;
    X_\init = [45,55] \times [-55, -45] \times [50,60] \times [45, 55]^2, \\
    & X_\goal = [-25,25]^5,\;
    X_\safe = \text{Inter}(X) \setminus \Big( [-100,-80] \times [-100, 100] \times [-100, -80]^3 \Big),
\end{align*}
with $\mathcal{P}_\ra = 0.95$.

\paragraph{10D GBM}
% \[
% A \in \mathbb{R}^{10 \times 10}, \qquad
% A_{ij}=
% \begin{cases}
% -0.5, & i=j,\\
% 1, & j=i+1,\\
% -1, & i=j+1,\\
% 0, & \text{otherwise}.
% \end{cases}
% \]
The reach-avoid specification is given by
\begin{align*}
    & X = [-100,100]^10,\;
    X_\init = [45,55] \times [-55, -45] \times [50,60] \times [45, 55]^7, \\
    & X_\goal = [-25,25]^10,\;
    X_\safe = \text{Inter}(X) \setminus \Big( [-100,-80] \times [-100, 100] \times [-100, -80]^8 \Big),
\end{align*}
with $\mathcal{P}_\ra = 0.95$.

\subsection{3D Stochastic Lorenz~\citep{lorenz2017deterministic}}
The SDE is
\begin{align*}
    dx = \begin{bmatrix}
        -10 x_1 + 10 x_2 \\
        x_1(28 - x_3) - x_2 \\
        x_1 x_2 - \frac{8}{3} x_3
    \end{bmatrix}dt + \begin{bmatrix}
        0.1 \\ 0.1 \\ 0.1
    \end{bmatrix}dw.
\end{align*}
The reach--avoid specification (adapted from~\cite{edwards2025general}) is:
\begin{align*}
    & X = [-6,6]^3,\;
    X_\init = [-1,1]^3, \\
    & X_\goal = [-0.3,0.3]^3,\;
    X_\safe = [-5.5,5.5]^3.
\end{align*}
The architecture of the neural controller is a linear feedback, given in Table~\ref{tab:lorentz_u_nn}.
\begin{table}[htbp!]
    \centering
    \begin{tabular}{lccc}
        \toprule
        \textbf{Layer Connection} & \textbf{Type} & \textbf{\# Neurons} & \textbf{Activation Function} \\
        \midrule
        Input Layer$\rightarrow$ Output Layer  & Fully Connected & 3  & N/A \\
        \bottomrule
    \end{tabular}
    \caption{Neural (linear feedback) controller of the 3D stochastic Lorenz system.}
    \label{tab:lorentz_u_nn}
\end{table}

\subsection{XV15 Aircraft~\citep{choi2025data}}\label{appendix:xv15}
We consider the 3D longitudinal dynamics of the XV15 aircraft. The state is $x = [v, \gamma, \beta]$, where $v$ is the airspeed, $\gamma$ is the flight path angle and $\beta$ is the rotor tilt angle.
The dynamics is controlled by three inputs: $u = [T, \alpha, \delta]$, where $T \in [0,T_{\max}]$ is the rotor thrust, $\alpha \in [-\alpha_{\max}, \alpha_{\max}]$ is the angle of attack and $\delta \in [-\delta_{\max}, \delta_{\max}]$ is the rotor tilting rate.
The SDE is:
\[
    dx = \begin{bmatrix}
        -g \sin \gamma + \frac{1}{m}\Big(\cos(\alpha+\beta)T -D(v,\alpha,\beta) \Big) \\
        -g \frac{\cos \gamma}{v} + \frac{1}{m}\Big(\frac{\sin(\alpha+\beta)}{v}T + \frac{L(v,\alpha,\beta)}{v} \Big) \\
        \delta
    \end{bmatrix} dt + 
    \begin{bmatrix}
        0.5 \\ 0.1(\pi/180) \\ 0.1(\pi/180)
    \end{bmatrix}dw,
\]
where $L(v,\alpha,\beta)$ and $D(v,\alpha,\beta)$ are the aerodynamic lift and drag, respectively, and $m$ is the aircraft mass.
The reach--avoid specification is:
\begin{align*}
    & X = [0.5, 100] \times [-20 \text{ deg}, 20 \text{ deg}] \times [0, 90 \text{ deg}],\\
    & X_\init = [28, 32] \times [8.5 \text{ deg}, 10.5 \text{ deg}] \times [58 \text{ deg}, 62 \text{ deg}], \\
    & X_\goal = [65, 85] \times [-2 \text{ deg}, 10 \text{ deg}] \times [25 \text{ deg}, 35 \text{ deg}], \\
    & X_\safe = \Big( [1, 99.5] \times [-19 \text{ deg}, 19 \text{ deg}] \times [1 \text{ deg}, 89 \text{ deg}] \Big).
\end{align*}
The architecture of the neural controller is given in Table~\ref{tab:xv15_u_nn}. The output Sigmoid activation is to ensure $T \in [0, T_{\max}]$, and the other Tanh activations ensure $\alpha$ and $\beta$ respect their output bounds, respectively.
\begin{table}[htbp!]
    \centering
    \begin{tabular}{lccc}
        \toprule
        \textbf{Layer Connection} & \textbf{Type} & \textbf{\# Neurons} & \textbf{Activation Function} \\
        \midrule
        Input Layer $\rightarrow$ Hidden Layer 1  & Fully Connected & 64  & Tanh \\
        Hidden Layer 1 $\rightarrow$ Output Layer  & Fully Connected & 3  & [Sigmoid, Tanh, Tanh] \\
        \bottomrule
    \end{tabular}
    \caption{Neural controller of the 3D XV15 aircraft.}
    \label{tab:xv15_u_nn}
\end{table}

\subsection{Training and Tuning Procedure}
Throughout the experiments, we use the same architecture for neural certificate function given in Table~\ref{tab:v_nn} with the number of neurons reported in Table~\ref{tab:v_nn_sizes}.
% --- Table 1: shared architecture template ---
\begin{table}[htbp!]
    \centering
    \begin{tabular}{lccc}
        \toprule
        \textbf{Layer Connection} & \textbf{Type} & \textbf{\# Neurons} & \textbf{Activation Function} \\
        \midrule
        Input Layer $\rightarrow$ Hidden Layer 1  & Fully Connected & $h_0$  & Sigmoid \\
        Hidden Layer 1 $\rightarrow$ Hidden Layer 2  & Fully Connected & $h_1$  & Sigmoid \\
        Hidden Layer 2 $\rightarrow$ Output Layer  & Fully Connected & 1  & N/A \\
        \bottomrule
    \end{tabular}
    \caption{Neural certificate architecture for all experiments.}
    \label{tab:v_nn}
\end{table}

\begin{table}[htbp!]
    \centering
    \begin{tabular}{lll}
        \toprule
        \textbf{Group} & \textbf{Experiment} & \textbf{$(h_0,h_1)$} \\
        \midrule
        \multicolumn{3}{l}{\textbf{Verification}} \\
        \midrule
        & 2D Inv. Pend. & $(64, 16)$ \\
        & 2D GBM & $(64, 64)$ \\
        & 3D GBM & $(64, 64)$ \\
        & 4D GBM & $(64, 16)$ \\
        & 5D GBM & $(64, 16)$ \\
        & 10D GBM & $(64, 64)$ \\
        \midrule
        \multicolumn{3}{l}{\textbf{Control Synthesis}} \\
        \midrule
        & 2D GBM Non-Eq. & $(64, 64)$ \\
        & 2D Inv. Pend. & $(64, 64)$ \\
        & 3D Lorenz & $(64, 64)$ \\
        & 3D XV15 & $(64, 64)$ \\
        \bottomrule
    \end{tabular}
    \caption{Experiment list grouped by task, with the number of neurons $(h_0,h_1)$ used for the neural certificate in each case.}
    \label{tab:v_nn_sizes}
\end{table}
Below, we describe the adaptive refinement strategy used in practice.

\paragraph{Cell initialization and adaptive refinement.}
Each region is initialized with a prescribed number of cells. If a region’s loss stagnates above zero, this indicates that the current discretization is insufficient to certify the constraint. In such cases, we increment the number of cells for that region and restart training. During training, adaptive refinement is applied periodically, and its frequency is a key tuning parameter. Early in training, refinement occurs less frequently to avoid excessive computation. As training progresses and region losses approach zero, refinement frequency is increased to uncover cells that are already SAT but not yet provably certified. 
% In some cases, subdividing cells allows the network to satisfy the constraint for regions that were previously not certified, while regions that remain difficult are handled by increasing the initial number of cells in subsequent runs. 
The timing and selection of refinement are determined empirically based on monitoring the bound-based loss.

\paragraph{Cell merging.}
Adjacent cells that satisfy the constraints sufficiently, i.e., with respect to some merging margins, are merged. 
The choice of such merging margin is a critical tuning parameter. 
If the margin is too loose, few or no cells will be merged, and the potential benefits of reduced memory usage and faster training are not realized. Conversely, if the margin is too tight, cells that were marginally SAT may be merged, drastically altering the loss landscape and effectively discarding progress achieved during training. In practice, this margin is selected empirically; A good rule of thumb is to start with sufficiently large margins, then reduce them gradually if needed. Figure~\ref{fig:2D GBM non eq discret} illustrates how adaptive refinement and cell merging alter the cell partition, comparing the discretization at the initial and 4,000 epochs.

\begin{figure}[!h]
  \centering
  \begin{minipage}[t]{0.45\textwidth}
    \centering
\includegraphics[width=\linewidth]{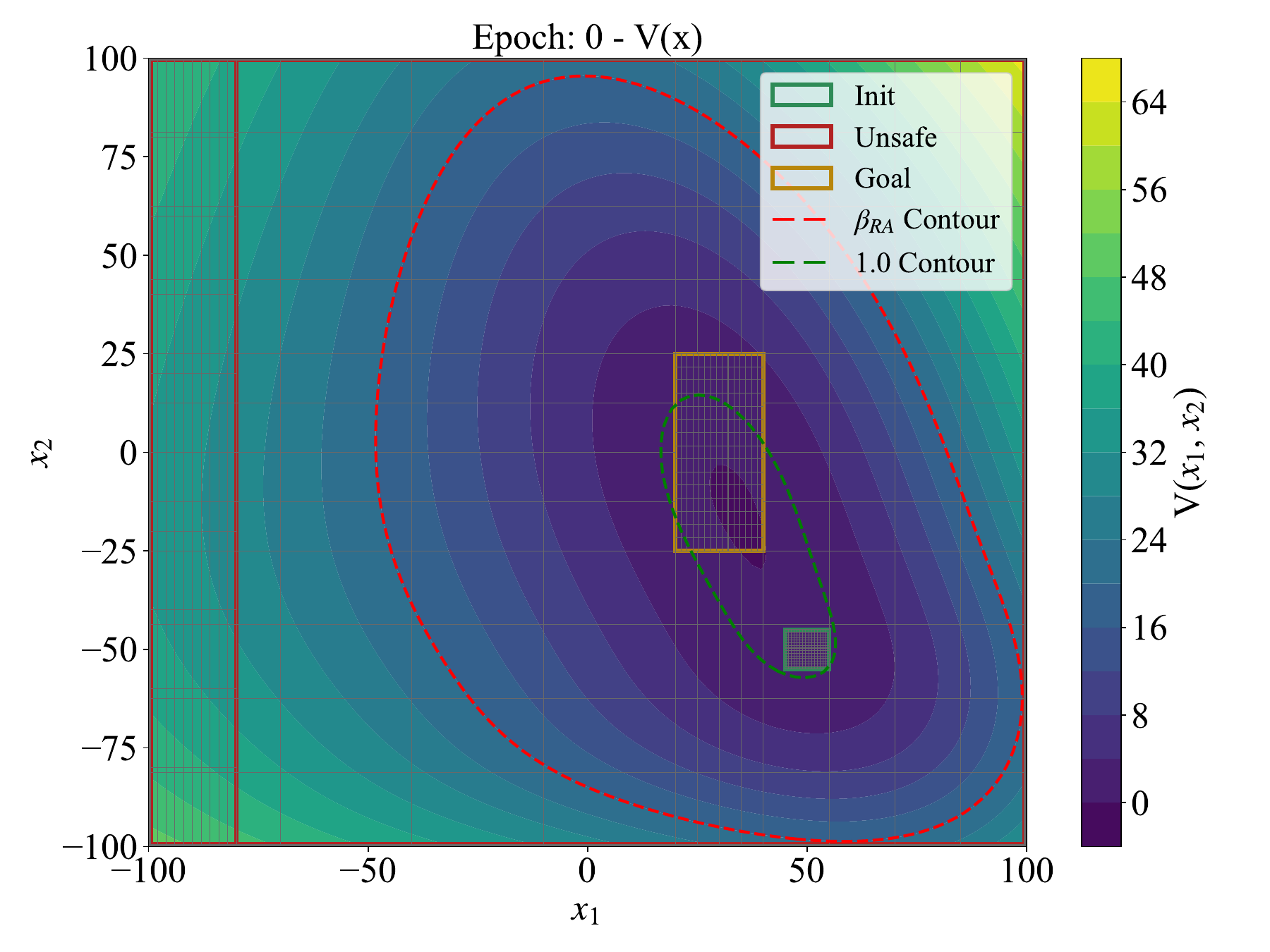}
    \vspace{0.0em}
    {\small (a)}
  \end{minipage}
  \begin{minipage}[t]{0.45\textwidth}
    \centering
\includegraphics[width=\linewidth]{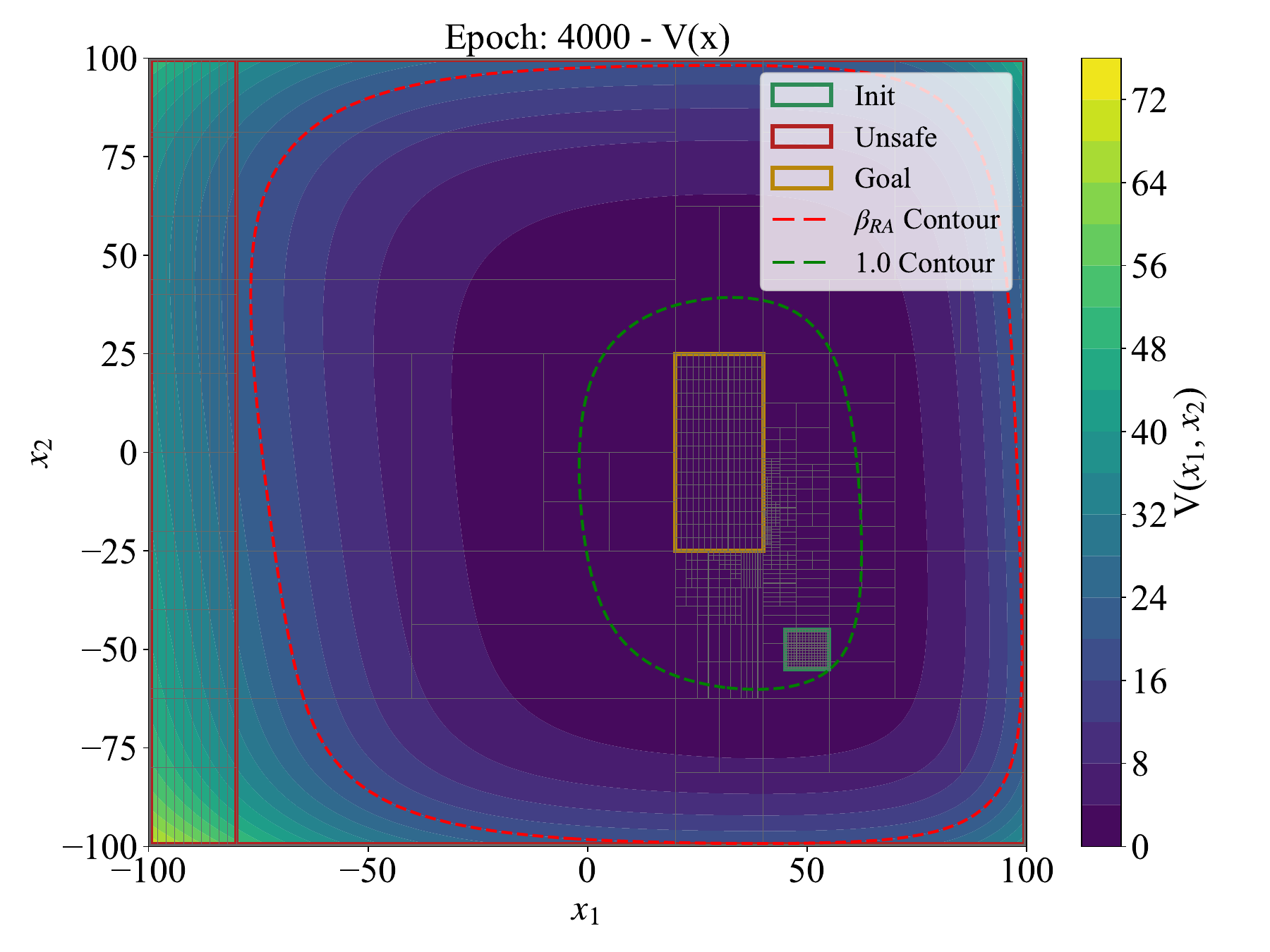}
    \vspace{0.0em}
    {\small (b)}
  \end{minipage}
  \caption{2D GBM synthesis: (a) Cell partition at the end of warm-start training, before bound training; (b) Cell partition after 4000 epochs of bound training, shortly before achieving SAT.}
  \label{fig:2D GBM non eq discret}
\end{figure}

\paragraph{Network architecture and scaling.}
Feedforward networks with two hidden layers of initially equal width are used for the neural certificate $V_{\theta}$. If constraints are satisfied, reducing the width of the second hidden layer can accelerate training.
% without losing feasibility. 
Input and output scaling parameters, $\mathbf{s}_{\mathrm{in}} \in \mathbb{R}^D$ and $\mathbf{s}_{\mathrm{out}} \in \mathbb{R}$ (seen in Appendix~\ref{appendix:gv_derive}), affect training via two mechanisms. First, gradients scale as $\partial V / \partial x \propto \mathbf{s}_{\mathrm{out}} / \mathbf{s}_{\mathrm{in}}$, where larger $\mathbf{s}_{\mathrm{out}}$ accelerates convergence but may induce instability. Second, normalized inputs $x / \mathbf{s}_{\mathrm{in}}$ control bound tightness: larger $\mathbf{s}_{\mathrm{in}}$ tightens IBP bounds but may saturate activations and vanish gradients. 
Since $\mathbf{s}_{\mathrm{in}} \in \mathbb{R}^D$, we define the effective ratio as $\mathbf{s}_{\mathrm{out}} / \max(\mathbf{s}_{\mathrm{in}})$. 
Through experimentation, we select $\mathbf{s}_{\mathrm{out}} / \max(\mathbf{s}_{\mathrm{in}})$ in the range $0.2$–$1.0$, balancing gradient magnitude and bound tightness for stable training. In Eq.~\ref{eq:ra_hard_total_loss}, all loss weights are set to $1$.

\paragraph{Scenario-Based Training Setup}
% \ml{move the following to the appendix}
The state samples are distributed according to a weighted distribution that samples uniformly from $X_\init, X_\goal$ and $X_\unsafe$ each with probability $0.1$ and from the rest of $X$ with probability $0.7$. 
Conditional to each region, the distribution is uniform.
This choice ensures that our dataset contains samples from all regions. 
We note that our approach holds regardless of the sampling distribution.
% Furthermore, to highlight the need to optimize for the weights of the last layer of the certificate, we attempted to verify a controller and certificate by solving the scenario program in Eq.~\eqref{eq:rcp1} but fixing the weights. This resulted in infeasibility of the programs for every case study in Table~\ref{tab:verification_methods_results}.%, thus highlighting the need of optimizing over the weights for our approach to be successful.

%%% \ck{This loss plot may cause confusion due to the "outside" loss; we did not define this in the text.}
% \begin{figure}[!h]
%     \centering
%     \includegraphics[width=01.0\linewidth]{figures/gbm_veri_training_progress/loss_history.pdf}
%     \caption{Illustration of the bound-training loss vs epochs of the 2D GBM verification.}
%     \label{fig:placeholder}
% \end{figure}

\subsection{Control-Certificate Synthesis Visualization}\label{appendix:ctrl_syn_visual}
Here, we present additional visualization for the control synthesis results in Sec.~\ref{sec:exp_syn}.
Figure~\ref{fig:inv_pend_syn_results_comparison} presents the stochastic inverted pendulum synthesis results, comparing synthesized-controlled and uncontrolled rollouts in each panel. The left shows angle snapshots, the middle shows the certificate contour with both trajectories overlaid, and the right shows the corresponding torque inputs for five rollouts. Without control, the pendulum does not swing up and instead stays near the downward equilibrium.

\begin{figure}[htbp!]
    \centering
    \includegraphics[width=01.0\linewidth,height=0.8\textheight,keepaspectratio]{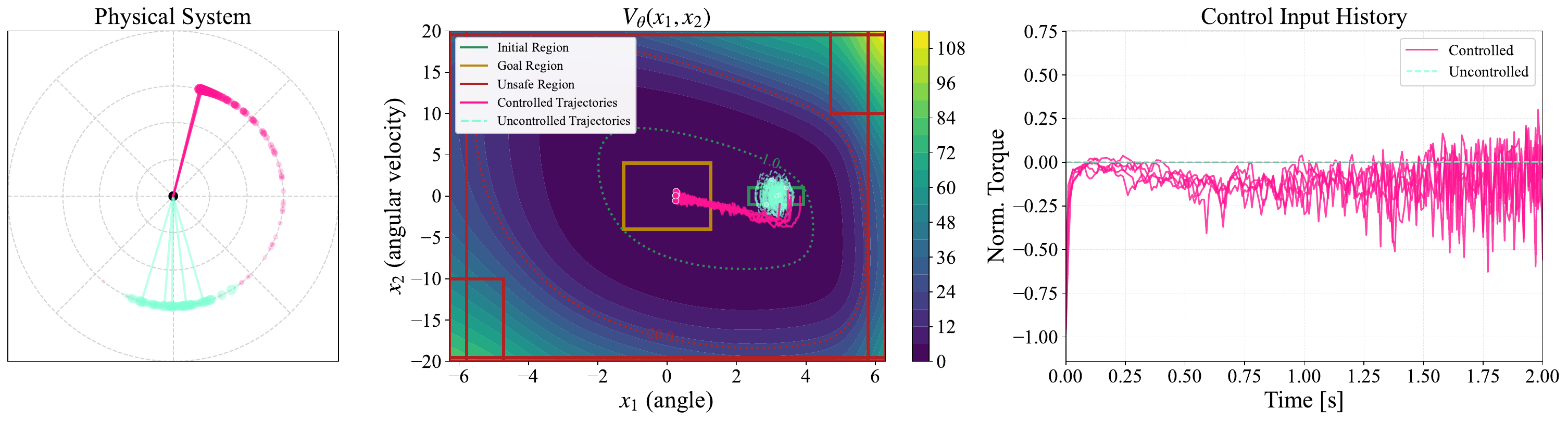}
    \caption{Stochastic inverted pendulum synthesis: angle snapshots (left), phase-plane trajectories overlaid on $V_{\theta}$ (middle), and torque inputs (right), comparing controlled and uncontrolled rollouts.}   \label{fig:inv_pend_syn_results_comparison}
\end{figure}
The synthesis results of Lorenz system are shown in Fig.~\ref{fig:lorenz_syn_results}, comparing open-loop trajectories (aquamarine dashed) with trajectories under the synthesized controller (pink). The left panels plot the 3D phase evolution, and the right panels show the corresponding 2D projections overlaid on the certificate contours. Without control, trajectories leave the safe set (red boxes); with control, they remain safe and are steered into the small goal set containing the origin.
\begin{figure}[htbp!]
    \centering
\includegraphics[width=1.0\linewidth,height=1.0\textheight,keepaspectratio]{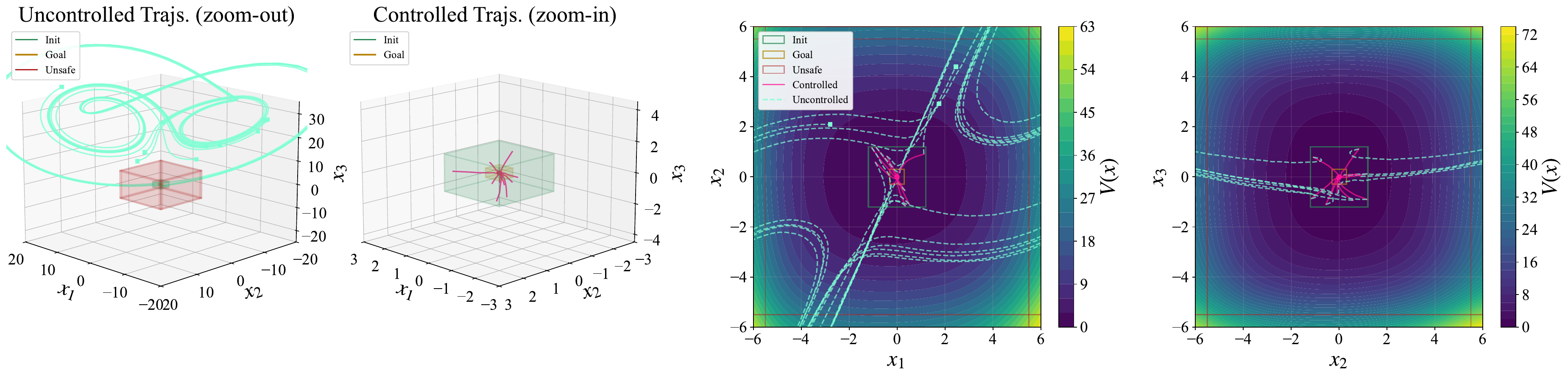}
    \caption{Lorenz synthesis: 3D trajectories (left) and 2D projections on $V_{\theta}$ contours (right), controlled vs.\ uncontrolled.}
    \label{fig:lorenz_syn_results}
\end{figure}
Figure~\ref{fig:xv15_syn_results} showcases control synthesis for the XV15 aircraft. 
We compare three strategies: constant trimmed inputs for forward flight, an initialized controller trained by sampling, and the controller synthesized via bound-training loss. 
The left panel shows the 3D paths, while the middle and right panels report the longitudinal position and control inputs. 
Under constant trim, the aircraft quickly becomes unstable (aqua dashed lines). In contrast, the synthesized controller stabilizes the trajectories, which reach the goal with smoother inputs than the sample-trained controller (pink vs.\ navy).
\begin{figure}[!h]
    \centering
    \includegraphics[width=01.0\linewidth,height=0.8\textheight,keepaspectratio]{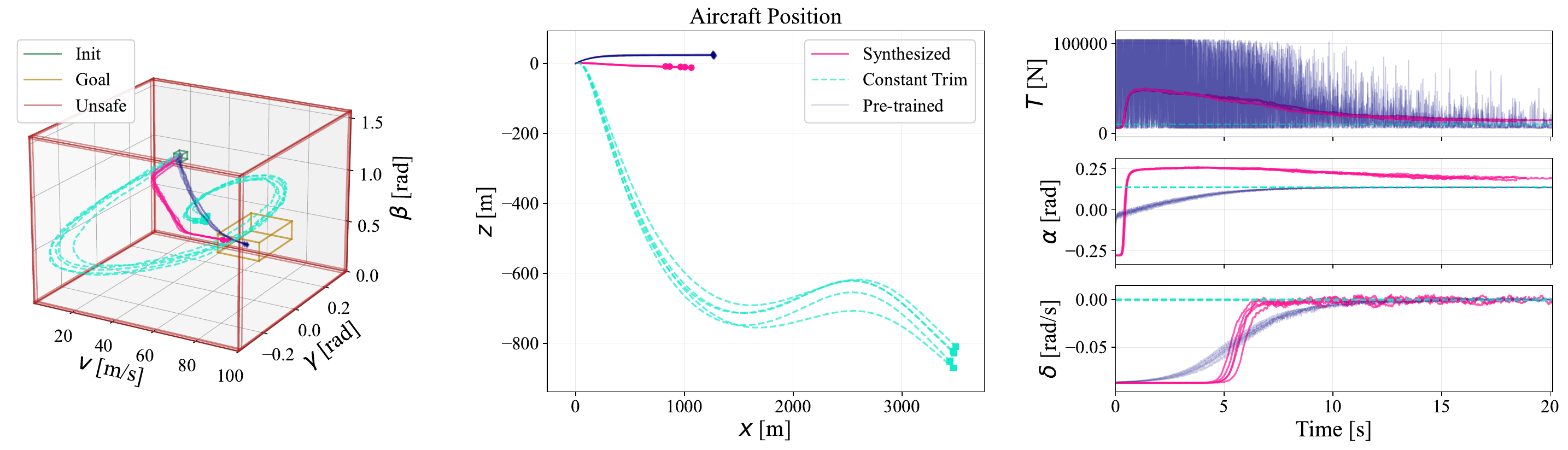}
    \caption{XV15 aircraft synthesis: phase trajectories (left), longitudinal position (middle), and control inputs (right), comparing trim, sample-trained, and synthesized controllers.}
    \label{fig:xv15_syn_results}
\end{figure}

% \begin{figure}[!h]
%   \centering
%   \begin{minipage}[t]{0.45\textwidth}
%     \centering
% \includegraphics[width=\linewidth]{figures/appendix/gbm_generator_2d.pdf}
%     \vspace{0.0em}
%     {\small (a) GBM Non-Equilibrium}
%   \end{minipage}
%   \begin{minipage}[t]{0.45\textwidth}
%     \centering
% \includegraphics[width=\linewidth]{figures/appendix/invpend_generator_2d.pdf}
%     \vspace{0.0em}
%     {\small (b) Inverted Pendulum}
%   \end{minipage}
%   \caption{Generator contour plots for 2D synthesis results.}
%   \label{fig:generator_2d}
% \end{figure}
%

\end{document}